\setlist{nolistsep}  
\DeclareSymbolFont{cmi}   {OT1}{cmr}{m}{it}
\DeclareMathSymbol\val    {\mathord}{cmi}{118} 
\DeclareMathSymbol\wal    {\mathord}{cmi}{119} 
\newfont{\bbb}{bbm10 scaled 1100}                     
\newfont{\bbbs}{bbm10 scaled 800}                     
\newfont{\bbbss}{bbm10 scaled 600}                    
\newcommand{\IN}{\mbox{\bbb N}}                       
\newcommand{\ID}{\mbox{\bbb D}}                       
\newcommand{\IDs}{\mbox{\bbbs D}}                     
\newcommand{\IG}{\mbox{\bbb G}}                       
\newcommand{\IGs}{\mbox{\bbbs G}}                     
\newcommand{\IGss}{\mbox{\bbbss G}}                   
\newcommand{\IT}{\mbox{\bbb T}}                       
\newcommand{\T}{{\rm T}}                              
\newcommand{\fL}{{\cal L}}                            
\newcommand{\fT}{{\cal T}}                            
\newcommand{\R}{\mathrel{\cal R}}                     
\newtheorem{defi}{Definition}
\newtheorem{theo}{Theorem}
\newtheorem{prop}{Proposition}
\newtheorem{obse}{Observation}
\newtheorem{exam}{Example}
\newenvironment{definition}[1]{\begin{defi} \rm \label{df:#1} }{\end{defi}}
\newenvironment{theorem}[1]{\begin{theo} \rm \label{thm:#1} }{\end{theo}}
\newenvironment{proposition}[1]{\begin{prop} \rm \label{pr:#1} }{\end{prop}}
\newenvironment{observation}[1]{\begin{obse} \rm \label{obs:#1} }{\end{obse}}
\newenvironment{example}[1]{\begin{exam} \rm \label{ex:#1} }{\end{exam}}
\newenvironment{proof}{\begin{trivlist} \item[\hspace{\labelsep}\bf Proof:]}{\hfill $\Box$\end{trivlist}}
\newcommand{\Sec}[1]{Section~\ref{sec:#1}}
\newcommand{\df}[1]{Definition~\ref{df:#1}}
\newcommand{\thm}[1]{Theorem~\ref{thm:#1}}
\newcommand{\pr}[1]{Proposition~\ref{pr:#1}}
\newcommand{\ex}[1]{Example~\ref{ex:#1}}
\def\comesfrom{\@transition\leftarrowfill}
\def\goesto{\@transition\rightarrowfill}
\def\ngoesto{\@transition\nrightarrowfill}
\def\Goesto{\@transition\Rightarrowfill}
\def\nGoesto{\@transition\nRightarrowfill}
\def\xmapsto{\@transition\mapstofill}
\def\nxmapsto{\@transition\nmapstofill}
\def\@transition#1{\@@transition{#1}}
\newbox\@transbox
\newbox\@arrowbox
\newbox\@downbox
\def\@@transition#1#2%
\wd\@transbox{#1}
\@transbox\hbox{$\mathop{\box\@arrowbox}\limits^{\box\@transbox}$}
\def\nrightarrowfill{$\m@th\mathord-\mkern-6mu%
  \cleaders\hbox{$\mkern-2mu\mathord-\mkern-2mu$}\hfill
  \mkern-6mu\mathord\not\mkern-2mu\mathord\rightarrow$}
\def\Rightarrowfill{$\m@th\mathord=\mkern-6mu%
  \cleaders\hbox{$\mkern-2mu\mathord=\mkern-2mu$}\hfill
  \mkern-6mu\mathord\Rightarrow$}
\def\nRightarrowfill{$\m@th\mathord=\mkern-6mu%
  \cleaders\hbox{$\mkern-2mu\mathord=\mkern-2mu$}\hfill
  \mkern-6mu\mathord\not\mathord\Rightarrow$}
\def\mapstofill{$\m@th\mathord\mapstochar\mathord-\mkern-6mu%
  \cleaders\hbox{$\mkern-2mu\mathord-\mkern-2mu$}\hfill
  \mkern-6mu\mathord\rightarrow$}
\def\nmapstofill{$\m@th\mathord\mapstochar\mathord-\mkern-6mu%
  \cleaders\hbox{$\mkern-2mu\mathord-\mkern-2mu$}\hfill
  \mkern-6mu\mathord\not\mkern-2mu\mathord\rightarrow$}
\newcommand{\ar}[1]{\mathrel{\goesto{#1}}}            
\newcommand{\nar}[1]{\mathrel{\ngoesto{#1\;}}}        
\newcommand{\phrase}[1]{\index{#1}\emph{#1}}          
\newcommand{\plat}[1]{\raisebox{0pt}[0pt][0pt]{$#1$}} 
\newcommand{\den}[1]{\mbox{$[\hspace{-1.6pt}[$}\,#1\,\mbox{$]\hspace{-1.6pt}]$}}  
\newcommand{\rec}[1]{\plat{			      
	\stackrel{\mbox{\tiny $/$}}
	{\raisebox{-.3ex}[.3ex]{\tiny $\backslash$}}
	\!\!#1\!\!
	\stackrel{\mbox{\tiny $\backslash$}}
	{\raisebox{-.3ex}[.3ex]{\tiny $/$}} }}
\newcommand{\obis}[2]{\mathrel{_{#1}\,		      
	\raisebox{.3ex}{$\underline{\makebox[.7em]{$\leftrightarrow$}}$}
                  \,_{#2}}}
\newcommand{\bis}[1]{\obis{}{#1}}		      
\newcommand{\nobis}{\mathrel{\mbox{$\hspace{3.3pt}\not\hspace{-3.3pt}\,  
	\raisebox{.3ex}{$\underline{\makebox[.7em]{$\leftrightarrow$}}$}$}}}
\newcommand{\dom}{{\it dom}}                          
\newcommand{\eqa}{\mathrel{\plat{\stackrel{\alpha}=}}} 
\newcommand{\Var}{{\it Var}}                          
\newcommand{\var}{{\it var}}                          
\def\titlerunning{On the Meaning of Transition System Specifications}
\title{\titlerunning}
\author{Rob van Glabbeek
\institute{Data61, CSIRO, Sydney, Australia}
\institute{School of Computer Science and Engineering,
University of New South Wales, Sydney, Australia}
\email{rvg@cs.stanford.edu}}
\begin{document}

\maketitle

\begin{abstract}
Transition System Specifications provide programming and specification languages with a semantics. They
provide the meaning of a closed term as a \emph{process graph}: a state in a labelled transition system.
At the same time they provide the meaning of an $n$-ary operator, or more generally
an open term with $n$ free variables, as an $n$-ary operation on process graphs.
The classical way of doing this, the \emph{closed-term semantics}, reduces the
meaning of an open term to the meaning of its closed instantiations. It makes the meaning of an
operator dependent on the context in which it is employed. Here I propose an alternative
\emph{process graph semantics} of TSSs that does not suffer from this drawback.

Semantic equivalences on process graphs can be lifted to open terms conform either the closed-term
or the process graph semantics. For pure TSSs the latter is more discriminating.

I consider five sanity requirements on the semantics of programming and specification
languages equipped with a recursion construct: \emph{compositionality}, applied to $n$-ary operators,
recursion and variables, \emph{invariance under $\alpha$-conversion}, and the
\emph{recursive definition principle}, saying that the meaning of a recursive call should be a
solution of the corresponding recursion equations.
I establish that the satisfaction of four of these requirements under the
closed-term semantics of a TSS implies their satisfaction under the process graph semantics.
\end{abstract}

\section{Introduction}

Transition System Specifications (TSSs) \cite{GrV92} are a formalisation of \emph{Structural Operational
Semantics} \cite{Pl04} providing programming and specification languages with an interpretation. They
provide the meaning of a closed term as a \emph{process graph}: a state in a labelled transition system.
At the same time they provide the meaning of an $n$-ary operator of the language, or more generally
an open term with $n$ free variables, as an $n$-ary operation on process graphs.
The classical way of doing this proceeds by reducing the meaning of an open term to the meaning of
its closed instantiations. I call this the \emph{closed-term semantics} of TSSs.
A serious shortcoming of this approach is that it makes the meaning of an operator dependent on the
context in which it is employed.

\begin{example}{context dependence}
Consider a TSS featuring unary operators $f$, $id$ and $a.\_$ for each action
$a$ drawn from an alphabet $A$, and a constant $0$. The set of admitted transition labels is
$Act:=A\uplus\{\tau\}$. The transition rules are\vspace{-2ex}
\[\quad a.x \ar{a} x~~\mbox{\small (for all $a\in A$)} \qquad
  \frac{x \ar{a} x'}{f(x) \ar{a} f(x')}~~\mbox{\small (for all $a\in A$)} \qquad
  \frac{x \ar{\alpha} x'}{id(x) \ar{\alpha} id(x')}~~\mbox{\small (for all $\alpha\in Act$)} \qquad
\]
When considered in their own right, the operators $f$ and $id$ are rather different:
the latter can mimic $\tau$-transitions of its argument, and the former can not.
Yet, in the context of the given TSS, one has $f(p)\bis{}id(p)$, no matter which term $p$ is
substituted for the argument of these operators. Here $\bis{}$ denotes strong bisimulation
equivalence, as defined in \cite{Mi90ccs,vG00}. This is because no process in the given TSS ever
generates a transition with the label $\tau$. The identification of $f$ and $id$ up to $\bis{}$
can be considered unfortunate, one reason being that is ceases to hold as soon as the language is
enriched with a fresh operator $\tau.\_$ with the transition rule $\tau.x \ar{\tau} x$.
As I will show later, this invalidates an intuitively plausible theorem on the relative expressiveness
of specification languages.
\end{example}
Here I propose an alternative \emph{process graph semantics} of TSSs that does not suffer from this drawback.

In \cite{vG94} I proposed five requirements on the semantics of programming and specification
languages equipped with a recursion construct: \emph{compositionality}, applied to $n$-ary operators,
recursion and variables, \emph{invariance under $\alpha$-conversion}, and the
\emph{recursive definition principle}, saying that the meaning of a recursive call should be a
solution of the corresponding recursion equations.

In many prior works on structural operational semantics, (some of) these requirements have been shown to hold
for various TSSs when employing the closed-term semantics. It would be time consuming to redo all
that work for the process graph semantics proposed here. To prevent this I show that the
satisfaction of four of these requirements under the closed-term semantics of a TSS implies their
satisfaction under the process graph semantics. The remaining requirement holds almost always.
\vspace{-1ex}

\paragraph{Overview of the paper}

\Sec{syntax} presents the syntax of the programming and specification languages I consider here.
For simplicity I restrict myself to languages with single-sorted signature, optionally featuring a recursion construct.
This is a rich enough setting to include process algebras like CCS~\cite{Mi90ccs}, CSP~\cite{BHR84},
ACP~\cite{BK84}, {\sc Meije}~\cite{AB84,dS85} and SCCS~\cite{Mi83}.

The traditional ``\emph{closed-term}'' interpretation of the process calculi CCS, {\sc Meije} and SCCS effectively
collapses syntax and semantics by interpreting the entire language as one big labelled transition
system (LTS) in which the closed terms of the language constitute the set of states. This LTS is
generated by a TSS, as formally defined in \Sec{TSS}. Semantic equivalences on LTSs thereby directly
relate closed terms. Two open terms are judged equivalent iff each of their closed substitutions are.

In \Sec{semantics} I present an interpretation of programming and specification languages that is
more common in universal algebra and mathematical logic \cite{Ma77}, and is also used in the
traditional semantics of ACP and CSP\@. It separates syntax and semantics through a semantic mapping
that associates with each closed term a value, and with each open term an operation on values.
This matches with what often is called \emph{denotational semantics}, except that I do not require
the meaning of recursion constructs to be provided by means of fixed point techniques.
In \Sec{process graphs} I specialise this general approach to an operational one by taking the
values to be \emph{process graphs}: states in labelled transition systems.
Likewise, \Sec{closed-term} casts the closed-term interpretation as a special case of the approach
from \Sec{semantics}, by taking the values to be the closed terms.

\Sec{semantics} also formulates the five sanity requirements mentioned above, most in a couple of equivalent forms.
\Sec{closed-term} shows how these requirements simplify to better recognisable forms under the
closed-term interpretation of programming and specification languages.
These requirements are parametrised by the choice of a semantic equivalence $\sim$ on values,
relating values that one does not need to distinguish. The traditional treatments of universal algebra
and mathematical logic, and the process algebra CSP, do not involve such a semantic equivalence;
this corresponds to letting $\sim$ be the identity relation.
In \Sec{quotient} I observe that any choice of $\sim$ can be reduced to the identity relation,
namely by taking as values $\sim$-equivalence classes of values.
This reduction preserves the five sanity requirements.

After these preparations, \Sec{pg semantics} defines the promised process graph interpretation of
TSSs. Some TSSs do not have a process graph interpretation, but I show that the large class of
\emph{pure} TSSs do.

Semantic equivalences on process graphs can be lifted to open terms conform either the closed-term
or the process graph interpretation. \Sec{lifting} shows, under some mild conditions, that for pure TSSs
the latter is more discriminating.  \Sec{congruence} illustrates on a practical process algebra that
whether a semantic equivalence is a congruence may depend on which of the two interpretations is chosen.

\Sec{relating} proves the promised result that when four of the five sanity requirements have been established
for the closed-term interpretation of a TSS, they also hold for its process graph interpretation.
It also shows that the remaining requirement almost always holds

\Sec{conservative} argues that something is gained by moving from the closed-term interpretation of
TSSs to the process-graph interpretation. Based on \ex{context dependence} it formulates an
intuitively plausible theorem relating relative expressiveness of specification languages and
conservative extensions, and shows how this theorem fails under the closed-term interpretation, but holds
under the process graph interpretation.

\Sec{related} addresses related work, and
\Sec{conclusion} evaluates the five sanity requirements for languages specified by TSSs of a specific form.

\section{Syntax}\label{sec:syntax}

In this paper $\Var$ is an infinite set of \phrase{variables}, ranged over
by $X,Y,x,y,x_i$ etc.

\begin{definition}{signatures} ({\em Terms}).
A \phrase{function declaration} is a pair $(f,n)$ of a \phrase{function
symbol} $f\mathbin{\not\in} \Var$ and an \phrase{arity} $n \in \IN$.\footnote{This work generalises seamlessly to
  operators with infinitely many arguments. Such operators occur, for instance, in \cite[Appendix A.2]{BrGH16b}.
  Hence one may take $n$ to be any ordinal.
  It also generalises to operators, like the \emph{summation} or \emph{choice} of CCS
  \cite{Mi90ccs}, that take any \emph{set} of arguments.} 
A function declaration $(c,0)$ is also called a \phrase{constant declaration}.
A \phrase{signature} is a set of function declarations. The set
$\IT^r(\Sigma)$ of \phrase{terms with recursion} over a signature $\Sigma$
is defined inductively by:
\begin{itemize}
\item $\Var \subseteq \IT^r(\Sigma)$,
\item if $(f,n) \in \Sigma$ and $t_1,\ldots,t_n \in \IT^r(\Sigma)$ then
$f(t_1,\ldots,t_n) \in \IT^r(\Sigma)$,
\item If $V_S \subseteq \Var$, $~S:V_S \rightarrow \IT^r(\Sigma)$ and $X\in V_S$,
then $\rec{X|S}\in \IT^r(\Sigma)$.
\end{itemize}
A term $c()$ is abbreviated as $c$. A function $S$ as appears in
the last clause is called a \phrase{recursive specification}. It is often displayed as $\{X=S_X \mid X \in V_S\}$.
Each term $S_Y$ for $Y\in V_S$ counts as a subterm of $\rec{X|S}$.
An occurrence of a variable $y$ in a term $t$ is \phrase{free} if it does not
occur in a subterm of the form $\rec{X|S}$ with $y \in V_S$. For $t\in\IT^r(\Sigma)$ a
term, $\var(t)$ denotes the set of variables occurring free in $t$. A term
is \phrase{closed} if it contains no free occurrences of variables. For $W\subseteq \Var$, let
$\IT^r(\Sigma,W)$ denote the set of terms $t$ with $\var(t)\subseteq W$,
and let $\T^r(\Sigma)=\IT^r(\Sigma,\emptyset)$ be the set of closed terms over $\Sigma$.
The  sets $\IT(\Sigma)$, $\IT(\Sigma,W)$ and $\T(\Sigma)$ of open and closed terms over
$\Sigma$ without recursion are defined likewise, but without the last clause.
\end{definition}

\begin{definition}{substitutions}({\em Substitution}).
A {\em $\Sigma$-substitution}\index{substitution} $\sigma$ is a partial function from $\Var$ to \plat{\IT^r(\Sigma)}.
It is \phrase{closed} if it is a total function from $\Var$ to \plat{\T^r(\Sigma)}. 
If $\sigma$ is a substitution and $t$ a term, then
$t[\sigma]$ denotes the term obtained from $t$ by replacing, for $x$
in the domain of $\sigma$, every free occurrence of $x$ in $t$ by
$\sigma(x)$, while renaming bound variables if necessary to prevent
name-clashes. In that case $t[\sigma]$ is called a \phrase{substitution
instance} of $t$. A substitution instance $t[\sigma]$ where $\sigma$
is given by $\sigma(x_i)=u_i$ for $i\in I$ is denoted as
$t[u_i/x_i]_{i\in I}$, and for $S$ a recursive specification $\rec{t|S}$ abbreviates $t[\rec{Y|S}/Y]_{Y\in V_S}$.
\end{definition}
Sometimes the syntax of a language is given as a signature together with an
annotation\index{annotated signature} that places some restrictions on the use of recursion \cite{vG94}.
This annotation may for instance require the sets $V_S$ to be finite, the functions $S$ computable,
or the sets of equations $S$ to be \emph{guarded}: a syntactic criterion that ensures that
they have unique solutions under a given interpretation.
It may also rule out recursion altogether.

\section{Semantics}\label{sec:semantics}

A \phrase{language} can be given by an annotated signature, specifying its syntax,
and an \phrase{interpretation}, assigning to every term $t$ its
meaning $\den{t}$. The meaning of a closed term is a \phrase{value} chosen from a
class of values $\ID$, which is called a \phrase{domain}.
The meaning of an open term is a \emph{$\Var\!$-ary operation} on $\ID$: a function of
type $\ID^\Var\!\rightarrow\ID$, where $\ID^\Var$ is the class of functions from $\Var$ to $\ID$.
It associates a value $\den{t}(\rho)\mathbin\in\ID$ to $t$ that\linebreak[2]
depends on the choice of a \phrase{valuation} $\rho\!:\Var\rightarrow\ID$.
The valuation assigns a value from $\ID$ to each variable. 

A \phrase{partial valuation} is a function $\xi :W \rightarrow \ID$ for $W\subseteq \Var$
that assigns a value only to certain variables. 
A \phrase{$W\!$-ary operation}, for $W \subseteq \Var$, is a function $F:\ID^W
\rightarrow \ID$. It associates a value to every $W$-tuple of values,
i.e.\ to every partial valuation of the variables with domain $W$.
If $F:\ID^\Var \rightarrow \ID$ and $\zeta \in \ID^{\Var\setminus W}$ then
$F(\zeta) : \ID^W \rightarrow \ID$ is given by $F(\zeta)(\xi) =
F(\xi\cup\zeta)$ for any $\xi \in \ID^W$.  For $\rho$ a valuation and
$W$ a set of variables, $\rho\backslash W$ is the partial valuation
with domain $\Var\backslash W$ such that $(\rho\backslash W) (x) = \rho (x)$ for $x
\in \Var\backslash W$.

\subsection{Sanity requirements on interpretations}\label{sec:requirements}

Usually interpretations are required to satisfy some sanity requirements.
The work \cite{vG94} proposed five such requirements: \emph{compositionality}, applied to variables,
$n$-ary operators and recursion, \emph{invariance under $\alpha$-conversion}, and the
\emph{recursive definition principle} (RDP).

In this paper I work with domains of interpretation $\ID$ that are equipped with a semantic
equivalence relation ${\sim} \subseteq \ID\times\ID$. It indicates that values $\val,\wal\in\ID$ with $\val \sim \wal$ need
not be distinguished on our chosen level of abstraction. The equivalence $\sim$ extends to functions
$F,G\!:\!\ID^W\!\!\rightarrow\ID$ by $F\mathbin\sim G$ iff $F(\xi)\mathbin\sim G(\xi)$ for all $\xi\mathbin\in \ID^W\!$.
It extends to partial valuations $\rho,\nu: W\rightarrow\ID$ or functions $\rho,\nu$ of type
$(\ID^W\!\!\rightarrow\ID)^W$ by $\rho\sim\nu$ iff $\rho(X)\sim\nu(X)$ for all $X\in W$.
Such an equivalence relation relaxes the requirements invariance under $\alpha$-conversion and RDP,
and modifies compositionality; I speak of \emph{compositionality up to $\sim$}.
The default case in which no semantic equivalence is in force
corresponds to taking $\sim$ to be the identity relation.

\emph{Compositionality up to $\sim$\index{compositionality}} demands
that the meaning of a variable is given by the chosen valuation, i.e.,\vspace{-1ex}
\begin{equation}\label{variable interpretation}
\den{x}(\rho) \sim \rho(x)
\end{equation}
for each $x\in \Var$ and valuation $\rho:\Var\rightarrow\ID$,
and that the meaning of a term is completely determined by the meaning of its direct subterms.
This means that for operators $(f,n)\in \Sigma$ and valuations
$\rho,\nu: \Var\rightarrow\ID$\vspace{-.5pt}
\begin{equation}\label{comp-operators}
\den{t_i}(\rho)\sim \den{u_i}(\nu) ~(\mbox{for all}~i=1,...,n) ~~\Rightarrow~~
\den{f(t_1,...,t_n)}(\rho) \sim \den{f(u_1,...,u_n)}(\nu)
\vspace{-.5pt}
\end{equation}
and for recursive specifications $S$ and $S'$ with $X\in V_S = V_{S'}$
and valuations $\rho,\nu: \Var\rightarrow\ID$\vspace{-.5pt}
\begin{equation}\label{comp-recursion}
\den{S_Y}(\rho\backslash V_S) \sim \den{S'_Y}(\nu\backslash V_S) ~(\mbox{for all}~Y\in V_S)
~~\Rightarrow~~ \den{\rec{X|S}}(\rho) \sim \den{\rec{X|S'}}(\nu).
\vspace{-.5pt}
\end{equation}
Note that the precondition of (\ref{comp-recursion}) evaluates the
variables in $S_Y$ and $S'_Y$ that are free in $\rec{X|S}$ and $\rec{X|S'}$
according to $\rho$ and $\nu$, respectively, and the variables from
$V_S=V_{S'}$ under any common valuation $\zeta$.

\emph{Invariance under $\alpha$-conversion}\index{invariance under $\alpha$-conversion} demands that the meaning of a term is
independent of the names of its bound variables, i.e.\
for any injective substitution $\gamma:V_S \rightarrow \Var$ such that the range of $\gamma$ contains
no variables occurring free in $\rec{S_Y|S}$ for some $Y\in V_S$\vspace{-.5pt}
\begin{equation}\label{alpha-conversion1}
\den{\rec{\gamma(X) | S[\gamma]}} \sim \den{\rec{X|S}}.
\end{equation}
Finally, the meaning of a term $\rec{X|S}$ should be
the $X$-component of a solution of $S$. To be precise,
\begin{equation}\label{solutions}
\den{\rec{X|S}} \sim \den{\rec{S_X|S}}.
\end{equation}
This property is called the \phrase{recursive definition principle} \cite{Fok00}.

A straightforward structural induction on $t$ using
(\ref{variable interpretation}), (\ref{comp-operators}) and
(\ref{comp-recursion}) shows that $\den{t}(\rho)$ depends only on the
restriction of $\rho$ to those variables that occur free in $t$.
If there are no such variables, $\den{t}(\rho)$ does not depends on $\rho$ at all,
and consequently can be abbreviated to $\den{t}$.

\subsection{Alternative forms of the sanity requirements}

Note that (\ref{comp-operators}) holds iff for every $(f,n)\in \Sigma$ there is
a function $f^{\IDs}:\ID^n \rightarrow \ID$ such that 
$$\forall \rho\in \ID^\Var: \den{f(t_1,\ldots,t_n)} (\rho) \sim
f^{\IDs}(\den{t_1}(\rho),\ldots,\den{t_n}(\rho)).$$ 
Requirement (\ref{comp-recursion}) can be characterised in the same vein:
\begin{proposition}{comp. recursion}
Property (\ref{comp-recursion}) holds iff for
every set $W \subseteq \Var$ there is a function $\mu_W^{\IDs}:
(\ID^W\! \rightarrow\ID)^{W} \rightarrow \ID^W$ such that for every
recursive specification $S:W \rightarrow \IT^r(\Sigma)$ with $X\in W$, and every $\rho:\Var\rightarrow\ID$,
$$\den{\rec{X|S}}(\rho) \sim \mu_W^{\IDs} (\den{S}(\rho\backslash W))(X).$$
\end{proposition}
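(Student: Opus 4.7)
My plan is to prove the biconditional in two directions, parallel to the note preceding this proposition that characterises (\ref{comp-operators}) via the existence of semantic operations $f^{\IDs}$.

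For the $\Rightarrow$ direction, I would assume (\ref{comp-recursion}), fix $W\subseteq\Var$, and build $\mu_W^{\IDs}$ by the axiom of choice. Call $F\in(\ID^W\!\rightarrow\ID)^W$ \emph{realised} if $F = \den{S}(\rho\backslash W)$ for some recursive specification $S:W\to\IT^r(\Sigma)$ and valuation $\rho$; for each such $F$, fix one witness $(S_F,\rho_F)$ and set $\mu_W^{\IDs}(F)(X):=\den{\rec{X|S_F}}(\rho_F)$ for $X\in W$, assigning arbitrary values for non-realised $F$. Given any $S$ with $V_S=W$ and any $\rho$, the tuple $F := \den{S}(\rho\backslash W)$ is realised, and its chosen witness $(S_F,\rho_F)$ satisfies $\den{(S_F)_Y}(\rho_F\backslash W)=\den{S_Y}(\rho\backslash W)$ for every $Y\in W$, so the antecedent of (\ref{comp-recursion}) holds strictly; applying (\ref{comp-recursion}) then delivers $\den{\rec{X|S}}(\rho) \sim \den{\rec{X|S_F}}(\rho_F) = \mu_W^{\IDs}(F)(X)$, as required.

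For the $\Leftarrow$ direction, assume $\mu_W^{\IDs}$ exists. Given $S,S',\rho,\nu$ with $V_S=V_{S'}=W$ whose bodies satisfy the antecedent of (\ref{comp-recursion}), so that $\den{S}(\rho\backslash W)\sim\den{S'}(\nu\backslash W)$ componentwise on $(\ID^W\!\rightarrow\ID)^W$, the defining property of $\mu_W^{\IDs}$ supplies $\den{\rec{X|S}}(\rho) \sim \mu_W^{\IDs}(\den{S}(\rho\backslash W))(X)$ and $\den{\rec{X|S'}}(\nu) \sim \mu_W^{\IDs}(\den{S'}(\nu\backslash W))(X)$; one closes the chain using $\sim$-compatibility of $\mu_W^{\IDs}$ on realised inputs together with transitivity of $\sim$. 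I expect the main obstacle to be precisely this last step: the bare existence clause does not literally demand $\sim$-compatibility of $\mu_W^{\IDs}$. I would resolve it by reading the existence clause in its natural, $\sim$-respecting form — a reading validated by the $\Rightarrow$-construction itself, where (\ref{comp-recursion}) applied to two witnesses of $\sim$-related realised inputs $F_1 \sim F_2$ also yields $\mu_W^{\IDs}(F_1) \sim \mu_W^{\IDs}(F_2)$, giving exactly the $\sim$-preservation needed to close the $\Leftarrow$ direction.
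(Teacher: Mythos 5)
Your forward direction is essentially the paper's own proof: the paper likewise builds $\mu_W^{\IDs}$ by choosing, for each $\chi\in(\ID^W\!\rightarrow\ID)^W$ for which this is possible, a witness pair $(S^\chi,\rho^\chi)$ and setting $\mu_W^{\IDs}(\chi)(X):=\den{\rec{X|S^\chi}}(\rho^\chi)$, then invoking (\ref{comp-recursion}). The only difference is that the paper selects witnesses satisfying $\den{S^\chi}(\rho^\chi\backslash W)\sim\chi$ rather than your exact equality; both choices work (your equality-based antecedent triggers (\ref{comp-recursion}) by reflexivity of $\sim$), the paper's merely making $\mu_W^{\IDs}$ meaningfully defined on every $\chi$ that is $\sim$-related to a realised one.

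Where you diverge is on the converse, which the paper dismisses in one line as ``trivial''. The obstacle you flag is genuine: when $\den{S}(\rho\backslash W)$ and $\den{S'}(\nu\backslash W)$ are $\sim$-related but \emph{distinct} elements of $(\ID^W\!\rightarrow\ID)^W$, the bare existence clause imposes no link between the values of $\mu_W^{\IDs}$ at these two arguments, so transitivity alone does not close the chain; an interpretation in which each of the two tuples is realised by only one pair $(S,\rho)$ admits a $\mu_W^{\IDs}$ satisfying the displayed clause even when (\ref{comp-recursion}) fails. Your repair --- requiring, or observing for the $\mu_W^{\IDs}$ constructed in the forward direction, that $\sim$-related realised arguments are sent to $\sim$-related values --- is the right one and is presumably what the author intends; the same caveat applies to the unproved characterisation of (\ref{comp-operators}) via $f^{\IDs}$ stated just before the proposition. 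So your proof matches the paper on the substantive direction and is more careful than the paper on the direction it calls trivial, at the cost of reading the existence clause in its $\sim$-respecting form.
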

\begin{proof}
  Since the meaning of term $S_Y\in\IT^r(\Sigma)$ is of type $\ID^\Var\rightarrow\ID$,
the meaning of a recursive specification $S:W\rightarrow \IT^r(\Sigma)$ is of type
$\den{S}: W\rightarrow (\ID^\Var\rightarrow\ID)$. Applying to that a partial valuation
$\rho\backslash W: \Var\backslash W \rightarrow \ID$ yields a function $\den{S}(\rho\backslash W)$
of type $(\ID^W\! \rightarrow\ID)^{W}$. Now for each $\chi\in (\ID^W\! \rightarrow\ID)^{W}$ choose,
if possible, a pair $S^\chi:W\rightarrow\IT^r(\Sigma)$ and $\rho^\chi:\Var\rightarrow\ID$ such that
$\den{S^\chi}(\rho^\chi\backslash W) \sim \chi$, and define $\mu_W^{\IDs} (\chi)(X)$, for $X\in W$,
to be $\den{\rec{X|S^\chi}}(\rho^\chi)$. For a $\chi$ for which no such pair can be found the
definition of $\mu_W^{\IDs} (\chi)(X)$ is arbitrary. Now pick $S:W \rightarrow \IT^r(\Sigma)$
and $\rho:\Var\rightarrow\ID$. Let $\chi:= \den{S}(\rho\backslash W)$.
Then $\den{S}(\rho\backslash W) = \chi \sim \den{S^\chi}(\rho^\chi\backslash W)$, so by 
(\ref{comp-recursion}) one has $$\den{\rec{X|S}}(\rho) \sim \den{\rec{X|S^\chi}}(\rho^\chi) =
\mu_W^{\IDs} (\chi)(X) = \mu_W^{\IDs}(\den{S}(\rho\backslash W))(X).$$
The other direction, that the existence of such a $\mu_W^{\IDs}$ implies (\ref{comp-recursion}), is trivial.
\end{proof}

\noindent
Write \plat{t \eqa u} if the terms $t,u\in\IT^r(\Sigma)$ differ only in the
names of their bound variables. Then (\ref{alpha-conversion1}) can be
rewritten as\vspace{-1ex}
\begin{equation}\label{alpha-conversion}
t\eqa u ~~\Rightarrow~~ \den{t} \sim \den{u}.
\vspace{-2ex}
\end{equation}
\begin{proposition}{alpha-conversion}
In the presence of (\ref{comp-operators}) and (\ref{comp-recursion}),
(\ref{alpha-conversion1}) is equivalent to (\ref{alpha-conversion}).
\end{proposition}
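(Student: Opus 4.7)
The plan is to prove both directions separately; the first is immediate, the second is where the real work lies.

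For (\ref{alpha-conversion}) $\Rightarrow$ (\ref{alpha-conversion1}): under the freshness hypothesis on $\gamma$ imposed by (\ref{alpha-conversion1}), the term $\rec{\gamma(X)|S[\gamma]}$ differs from $\rec{X|S}$ only in the names of its outermost bound variables, so $\rec{\gamma(X)|S[\gamma]} \eqa \rec{X|S}$, and (\ref{alpha-conversion}) delivers the conclusion.

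For the converse I would argue by induction on the size of $t$ (so that variable-for-variable renamings stay within reach of the IH). The variable case is immediate from (\ref{variable interpretation}), and the operator case propagates $\eqa$ into the immediate subterms and closes using the IH and (\ref{comp-operators}). The interesting case is $t = \rec{X|S} \eqa \rec{X'|S'} = u$, witnessed by a bijection $\gamma:V_S\to V_{S'}$ with $\gamma(X)=X'$ and $S_Y[\gamma]\eqa S'_{\gamma(Y)}$ for each $Y\in V_S$. I would then pick an injective $\delta:V_{S'}\to\Var$ whose range avoids every variable occurring free in $\rec{S_Z|S}$ or in $\rec{S'_Z|S'}$. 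Two applications of (\ref{alpha-conversion1}), with renamings $\delta\circ\gamma$ and $\delta$ respectively, give
$$\den{\rec{X|S}} \sim \den{\rec{\delta(X')\,|\,S[\delta\circ\gamma]}} \quad\text{and}\quad \den{\rec{X'|S'}} \sim \den{\rec{\delta(X')\,|\,S'[\delta]}}.$$
The two rewritten terms share outermost bound-variable set $\delta(V_{S'})$ and distinguished index $\delta(X')$; their bodies at index $\delta(\gamma(Y))$ are $S_Y[\delta\circ\gamma]$ and $S'_{\gamma(Y)}[\delta]$, which are $\eqa$-equivalent (apply $\delta$ to $S_Y[\gamma]\eqa S'_{\gamma(Y)}$, using stability of $\eqa$ under fresh variable renaming) and have the same size as $S_Y$, hence strictly smaller than $t$. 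The IH supplies $\sim$-equivalence body by body, and (\ref{comp-recursion}) closes the case.

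The main obstacle is the bookkeeping in the recursion step: I must commit to a precise definition of $\eqa$ on $\rec{-|-}$ terms (most naturally the least congruence containing the renamings of (\ref{alpha-conversion1})), verify that $\eqa$ is preserved under substitution by fresh variables, and check that the freshness of $\delta$ is simultaneously strong enough for both invocations of (\ref{alpha-conversion1}). These are standard properties of $\alpha$-equivalence, but they are what make the recursion case go through.
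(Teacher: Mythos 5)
Your proof is correct and follows the same overall skeleton as the paper's own proof --- the forward direction is immediate, and the converse is an induction whose variable and operator cases are identical --- but the recursion case takes a genuinely different route. The paper factors the $\alpha$-equivalence $\rec{X|S}\eqa u$ as $u=\rec{\gamma(X)|S'[\gamma]}$ with $S_Y\eqa S'_Y$ for all $Y\in V_S$: it first changes the bodies while keeping the binder set $V_S$ fixed, and only then applies a single outer renaming. This hands the induction hypothesis actual subterms $S_Y$ of $t$, so a plain structural induction suffices, and (\ref{alpha-conversion1}) is invoked only once, at the very end of the case. You instead start from the more standard witness of $\alpha$-equivalence (a bijection $\gamma:V_S\to V_{S'}$ with $S_Y[\gamma]\eqa S'_{\gamma(Y)}$), normalise both terms to a common fresh binder set by two applications of (\ref{alpha-conversion1}), and only then compare bodies; since those bodies are renamed copies of subterms rather than subterms themselves, you are forced to measure the induction by size rather than by structure, as you correctly note. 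Both routes work, and your applications of (\ref{comp-recursion}) and of the induction hypothesis check out (the renamed specifications share the binder set $\delta(V_{S'})$, and variable-for-variable renaming preserves size). What the paper's decomposition buys is a shorter recursion case and a structural induction; what it hides is exactly the $\alpha$-equivalence bookkeeping you make explicit: the claim that every $u\eqa\rec{X|S}$ can be written as $\rec{\gamma(X)|S'[\gamma]}$ with $S_Y\eqa S'_Y$ and $\gamma$ fresh is asserted there without proof, and is of the same order of difficulty as your ``stability of $\eqa$ under fresh renaming''. So the obstacles you flag are real, but they are not specific to your route.
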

\begin{proof}
Clearly, (\ref{alpha-conversion1}) is a special case of (\ref{alpha-conversion}).
The other direction proceeds by structural induction on $t$.

In case $t=X\in \Var$ then $u=X$ and thus $\den{t} \sim \den{u}$.

Let $t=f(t_1,...,t_n)$. Then $u=f(u_1,...,u_n)$ and $t_i\eqa u_i$ for each $i=1,...,n$.
By induction $\den{t_i} \sim \den{u_i}$ for each $i$. This means that
$\den{t_i}(\rho) \sim \den{u_i}(\rho)$ for each $\rho:\Var\rightarrow\ID$.
Hence, by (\ref{comp-operators}), $\den{t} \sim \den{u}$.

Let $t=\rec{X|S}$. Then $u=\rec{\gamma(X) | S'[\gamma]}$ for a recursive specification
$S':V_S\rightarrow\IT^r(\Sigma)$ with $S_Y \eqa S'_Y$ for all $Y\in V_S$,
and an injective substitution $\gamma:V_S \rightarrow \Var$ such that the range of $\gamma$ contains
no variables occurring free in $\rec{S'_Y|S}$ for some $Y\in V_S$.
By induction $\den{S_Y} \sim \den{S'_Y}$ for each $Y$. This means that
$\den{S_Y}(\rho) \sim \den{S'_Y}(\rho)$ for each $\rho:\Var\rightarrow\ID$,
so in particular $\den{S_Y}(\rho\backslash V_S) \sim
\den{S'_Y}(\rho\backslash V_S)$ for each such $\rho$.
Hence, by (\ref{comp-recursion}), $\den{\rec{X|S}} \sim \den{\rec{X|S'}}$.
Thus, by (\ref{alpha-conversion1}), $\den{t} \sim \den{\rec{X|S'}} \sim
\den{\rec{\gamma(X) | S'[\gamma]}} =\den{u}$.
\end{proof}

\subsection{Applying semantic interpretations to substitutions}\label{sec:substitutions}

The semantic mapping $\den{\ \ }:\IT^r(\Sigma) \rightarrow ((\Var\rightarrow\ID)\rightarrow\ID)$
extends to substitutions $\sigma:\Var\rightharpoonup\IT^r(\Sigma)$ by
$\den{\sigma}(\rho)(X):=\den{\sigma(X)}(\rho)$
for all $X\mathbin\in \Var$ and $\rho:\Var\rightarrow\ID$---here $\sigma$ is extended to a total
function by $\sigma(Y):=Y$ for all $Y\not\in\dom(\sigma)$. Thus $\den{\sigma}$ is of type
$(\Var\rightarrow\ID)\rightarrow(\Var\rightarrow\ID)$, i.e.\ a map from valuations to valuations.
The following results applies to languages satisfying sanity requirements
(\ref{variable interpretation})--(\ref{alpha-conversion1}).

\begin{proposition}{inductive meaning}
Let $t\in\IT^r(\Sigma)$ be a term, $\sigma:\Var\rightharpoonup\IT^r(\Sigma)$ a substitution, and
$\rho:\Var\rightarrow\ID$ a valuation. Then\vspace{-2ex}
\begin{equation}\label{inductive meaning}
\den{t[\sigma]}(\rho) \sim \den{t}(\den{\sigma}(\rho)).\vspace{1ex}
\end{equation}
\end{proposition}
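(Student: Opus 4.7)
The plan is to proceed by structural induction on $t$, following the three clauses of \df{signatures}. The base case and the operator case are routine applications of the corresponding sanity requirements together with the induction hypothesis; the work is concentrated in the recursion case, where the interaction of the bound variables $V_S$ of $t$ with the free variables occurring in the range of $\sigma$ has to be managed.

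If $t=x\in\Var$, then with $\sigma$ extended to a total function $t[\sigma]=\sigma(x)$, so $\den{t[\sigma]}(\rho) = \den{\sigma(x)}(\rho) = \den{\sigma}(\rho)(x) \sim \den{x}(\den{\sigma}(\rho))$ by (\ref{variable interpretation}). If $t=f(t_1,\ldots,t_n)$, then $t[\sigma]=f(t_1[\sigma],\ldots,t_n[\sigma])$; the induction hypothesis gives $\den{t_i[\sigma]}(\rho) \sim \den{t_i}(\den{\sigma}(\rho))$ for every $i$, and an application of (\ref{comp-operators}) to the valuations $\rho$ and $\den{\sigma}(\rho)$ delivers (\ref{inductive meaning}).

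For $t=\rec{X|S}$ the hard part is to avoid capture of free variables of the images $\sigma(Y)$ by the binders in $V_S$. I would first invoke (\ref{alpha-conversion}), which is equivalent to the stated (\ref{alpha-conversion1}) by \pr{alpha-conversion}, to replace $t$ by an $\alpha$-equivalent representative whose bound variables $V_S$ are disjoint both from $\dom(\sigma)$ and from every free variable occurring in any $\sigma(Y)$. Under this convention $t[\sigma]=\rec{X|S'}$ with $S'_Y := S_Y[\sigma]$, and (\ref{comp-recursion}) reduces the goal to proving, for each $Y\in V_S$ and each $\zeta\in\ID^{V_S}$, that $\den{S_Y[\sigma]}(\rho') \sim \den{S_Y}((\den{\sigma}(\rho)\backslash V_S)\cup\zeta)$, where $\rho' := (\rho\backslash V_S)\cup\zeta$.

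The induction hypothesis applied to $S_Y$ at $\rho'$ gives $\den{S_Y[\sigma]}(\rho') \sim \den{S_Y}(\den{\sigma}(\rho'))$, so it remains to compare the two valuations $\den{\sigma}(\rho')$ and $(\den{\sigma}(\rho)\backslash V_S)\cup\zeta$ on the free variables of $S_Y$. For $Z\in V_S$ the representative was chosen so that $\sigma(Z)=Z$, whence $\den{\sigma}(\rho')(Z) = \den{Z}(\rho') \sim \zeta(Z)$ by (\ref{variable interpretation}). For $Z$ free in $S_Y$ but not in $V_S$, the $\alpha$-choice ensures $\sigma(Z)$ has no free variable in $V_S$; since $\rho'$ and $\rho$ agree off $V_S$, the remark following the sanity requirements yields $\den{\sigma(Z)}(\rho') \sim \den{\sigma(Z)}(\rho) = \den{\sigma}(\rho)(Z)$. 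A short auxiliary lemma, proved by a separate structural induction from (\ref{variable interpretation}), (\ref{comp-operators}) and (\ref{comp-recursion}), records that $\sim$-equivalent valuations yield $\sim$-equivalent meanings for any fixed term; combining it with the pointwise equivalence just established turns the induction hypothesis into the required equivalence and closes the proof.
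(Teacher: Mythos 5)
Your proof is correct in outline and follows essentially the same route as the paper: structural induction on $t$, with the variable and operator cases immediate from (\ref{variable interpretation}) and (\ref{comp-operators}), and the recursion case handled by first $\alpha$-renaming the bound variables $V_S$ to prevent capture and then feeding the induction hypothesis into (\ref{comp-recursion}). Two remarks. First, where the paper arranges matters (via its condition (*)) so that the two valuations $\den{\sigma\backslash V_S}((\rho\backslash V_S)\cup\zeta)$ and $((\den{\sigma}(\rho))\backslash V_S)\cup\zeta$ agree outright on the variables free in $S$ and concludes directly, you obtain only $\sim$ on those variables (correctly so, since (\ref{variable interpretation}) is stated up to $\sim$, which makes the clause $Z\in V_S$ yield only an equivalence) and close the gap with an auxiliary lemma stating that $\sim$-equivalent valuations give $\sim$-equivalent meanings. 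That lemma is exactly \pr{congruence}; it depends only on (\ref{variable interpretation})--(\ref{comp-recursion}) and is proved by an independent structural induction, so there is no circularity, and your ending is if anything the more careful one. Second, a small slip in the $\alpha$-conversion step: you cannot in general choose $V_S$ disjoint from $\dom(\sigma)$, since $\sigma$ may be total (e.g.\ a closed substitution), nor disjoint from the free variables of all $\sigma(Y)$ with $Y$ ranging over $\Var$ (take $\sigma$ the identity); yet your treatment of the case $Z\in V_S$ relies on $\sigma(Z)=Z$. The repair is the one built into the definition of substitution: descend under the binder with $\sigma\backslash V_S$, which acts as the identity on $V_S$ by construction, and rename $V_S$ away from the free variables of $\sigma(Z)$ only for those $Z$ occurring free in $S$. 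With that adjustment your computation for the two cases $Z\in V_S$ and $Z\notin V_S$ goes through verbatim and the proof is complete.
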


\begin{proof}
By the definition of substitution, there is an $u\in\IT^r(\Sigma)$
with $t\eqa u$, such that $t[\sigma]=u[\sigma]$, and when performing
the substitution $\sigma$ on $u$ there is no need to rename any bound
variables occurring in $u$.\vspace{-2pt} It now suffices to obtain
$\den{u[\sigma]}(\rho) \sim \den{u}(\den{\sigma}(\rho))$,
because then $\den{t[\sigma]}(\rho) = \den{u[\sigma]}(\rho) \sim
\den{u}(\den{\sigma}(\rho)) \mathrel{{\stackrel{\mbox{\scriptsize (\ref{alpha-conversion})}}\sim}}
\den{t}(\den{\sigma}(\rho))$.\pagebreak[3]
For this reason it suffices to establish (\ref{inductive meaning}) for
terms $t$ and substitutions $\sigma$ with the property (*) that
whenever a variable $Z$ occurs free within a subterm $\rec{X|S}$ of
$t$ with $Y\in V_S$ then $Y$ does not occur free in $\sigma(Z)$.
I proceed with structural induction on $t$, while quantifying over all $\rho$.

Let $t=X\in \Var$. Then
$\den{X}(\den{\sigma}(\rho)) \stackrel{\mbox{\scriptsize (\ref{variable interpretation})}}\sim
(\den{\sigma}(\rho))(X) \stackrel{\mbox{\scriptsize\it def}}=
\den{\sigma(X)}(\rho) =\den{X[\sigma]}(\rho)$.

Let $t=f(t_1,...,t_n)$. By induction I may assume that
$\den{t_i[\sigma]}(\rho) \sim \den{t_i}(\den{\sigma}(\rho))$ for $i=1,...,n$.
Hence\vspace{-5pt}
$$\den{f(t_1,\ldots,t_n)[\sigma]}(\rho) =
\den{f(t_1[\sigma],\ldots,t_n[\sigma])}(\rho) \stackrel{\mbox{\scriptsize (\ref{comp-operators})}}\sim
\den{f(t_1,\ldots,t_n)}(\den{\sigma}(\rho)).$$

Let $t=\rec{X|S}$. Then $t[\sigma] = \rec{X|S[\sigma\backslash V_S]}$.
Given that the pair $t,\sigma$ satisfies property (*), so do the pairs
$S_Y,\sigma\backslash V_S$ for all $Y\in V_S$. Moreover, no $Y\in V_S$
occurs free in $\sigma(Z)$ for $Z\notin V_S$ occurring free in $S$.\hfill (\#)\newline
By induction I assume $\den{S_Y[\sigma\backslash V_S]}(\rho) \sim
\den{S_Y}\big(\den{\sigma\backslash V_S}(\rho)\big)$ for all $Y\in V_S$ and
all $\rho:\Var\rightarrow\ID$. Any such $\rho$ can be written as
$(\rho\backslash V_S)\cup \xi$ for some $\xi:V_S\rightarrow \ID$.
Now (\#) yields that for all $Z\mathbin\in \Var$ occurring free in $S$\vspace{-1ex}
$$\big( \den{\sigma\backslash V_S}((\rho\backslash V_S)\cup \xi)\big)(Z) =
 \big(((\den{\sigma}(\rho))\backslash V_S)\cup \xi\big)(Z).$$
Hence $\den{S_Y[\sigma\backslash V_S]}(\rho\backslash V_S)(\xi) \sim
\den{S_Y}\big(\den{\sigma\backslash V_S}((\rho\backslash V_S)\cup \xi)\big) =
\den{S_Y}\big((\den{\sigma}(\rho))\backslash V_S\big)(\xi)$
for all $Y\in V_S$, $\rho\!:\Var\rightarrow\ID$ and $\xi\!:V_S\rightarrow\ID$.
So $\den{S_Y[\sigma\backslash V_S]}(\rho\backslash V_S) \sim
\den{S_Y}\big((\den{\sigma}(\rho))\backslash V_S\big)$
for all $Y\mathbin\in V_S$ and $\rho\!:\Var\rightarrow\ID$.
One obtains\hfill
$\den{\rec{X|S}[\sigma]}(\rho)
= \den{\rec{X|S[\sigma\backslash V_S]}}(\rho)
\stackrel{\mbox{\scriptsize (\ref{comp-recursion})}}\sim
\den{\rec{X|S}}(\den{\sigma}(\rho))$.
\end{proof}

\section{Quotient Domains}\label{sec:quotient}

An equivalence relation $\sim$ on $\ID$ is a \phrase{congruence}%
\footnote{This property is called \phrase{lean congruence} in \cite{vG17b}.
There $\sim$ is called a \phrase{full congruence} for $\fL$ iff $\fL$ is compositional up to $\sim$.
In the absence of recursion this is equivalent to (\ref{congruence}), but in general it is
a stronger requirement---i.e., the reverse of \pr{congruence} does not hold \cite{vG17b}.} for $\fL$ if
\begin{equation}\label{congruence}
\rho\sim\nu  ~~\Rightarrow~~ \den{t}(\rho)\sim\den{t}(\nu)
\end{equation}
for any term $t$ and any valuations $\rho,\nu:\Var\rightarrow\ID$.

\begin{proposition}{congruence}
If a language $\fL$ is compositional up to an equivalence
$\sim$ then $\sim$ is a congruence for $\fL$.
\end{proposition}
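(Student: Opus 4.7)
The plan is to proceed by structural induction on $t$, applying the three compositionality conditions (\ref{variable interpretation}), (\ref{comp-operators}) and (\ref{comp-recursion}) case by case, where the hypothesis $\rho\sim\nu$ unpacks pointwise to $\rho(X)\sim\nu(X)$ for every $X\in\Var$.

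For $t=X\in\Var$, I would chain $\den{X}(\rho)\sim\rho(X)\sim\nu(X)\sim\den{X}(\nu)$ using (\ref{variable interpretation}) twice together with transitivity of the equivalence $\sim$. For $t=f(t_1,\ldots,t_n)$, the induction hypothesis yields $\den{t_i}(\rho)\sim\den{t_i}(\nu)$ for each $i$, and then (\ref{comp-operators}), applied with $u_i:=t_i$ and valuations $\rho,\nu$, gives the conclusion directly.

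The recursion case $t=\rec{X|S}$ is the only one that requires care, and is where I expect the main obstacle to lie. I would apply (\ref{comp-recursion}) with $S':=S$, so that it suffices to establish $\den{S_Y}(\rho\backslash V_S)\sim\den{S_Y}(\nu\backslash V_S)$ for every $Y\in V_S$. Since $\den{S_Y}(\rho\backslash V_S)$ is a function of type $\ID^{V_S}\rightarrow\ID$, and $\sim$ is lifted to such functions pointwise, this reduces to showing
$$\den{S_Y}\big(\xi\cup(\rho\backslash V_S)\big)\sim\den{S_Y}\big(\xi\cup(\nu\backslash V_S)\big)$$
for every $\xi:V_S\rightarrow\ID$. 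The two total valuations on the two sides agree on $V_S$ (both equal $\xi$) and on $\Var\setminus V_S$ are componentwise $\sim$-related by $\rho\sim\nu$; hence they are $\sim$-equivalent as valuations. Each $S_Y$ counts as a subterm of $\rec{X|S}$ by \df{signatures}, so the induction hypothesis applies and yields the required equivalence. One final appeal to (\ref{comp-recursion}) then gives $\den{\rec{X|S}}(\rho)\sim\den{\rec{X|S}}(\nu)$.

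The only subtlety worth verifying explicitly is the unfolding of the lifted $\sim$ on functions $\ID^{V_S}\rightarrow\ID$ and its compatibility with the convention $F(\zeta)(\xi)=F(\xi\cup\zeta)$; once that is laid out, the argument is a direct structural induction and no further machinery beyond the three compositionality clauses is needed.
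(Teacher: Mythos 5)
Your proof is correct and is exactly the ``straightforward structural induction on $t$'' that the paper offers as its entire proof: the variable and operator cases are immediate from (\ref{variable interpretation}) and (\ref{comp-operators}), and your handling of the recursion case --- instantiating (\ref{comp-recursion}) with $S'=S$, unfolding the pointwise lifting of $\sim$ to functions of type $\ID^{V_S}\rightarrow\ID$ via $F(\zeta)(\xi)=F(\xi\cup\zeta)$, and invoking the induction hypothesis on the subterms $S_Y$ --- is the intended argument, worked out in more detail than the paper provides.
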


\begin{proof}
A straightforward structural induction on $t$.
\end{proof}

Given a domain $\ID$ for interpreting languages and an equivalence
relation $\sim$, the \phrase{quotient domain} $\ID/_\sim$ consists of the
$\sim$-equivalence classes of elements of $\ID$. For $\val\in\ID$
let $[\val]_\sim\in\ID/_\sim$ denote the equivalence class containing $\val\in\ID$.
Likewise, for a valuation $\rho:\Var\rightarrow\ID$ in $\ID$, the valuation
\plat{[\rho]_\sim:\Var\rightarrow\ID/_\sim} in $\ID/_\sim$ is given by $[\rho]_\sim(x):= [\rho(x)]_\sim$;
it also represents the $\sim$-equivalence class of valuations in $\ID$
of which $\rho$ is a member.
Each valuation in $\ID/_\sim$ is of the form $[\rho]_\sim$.

An interpretation $\den{\_}: \IT^r(\Sigma) \rightarrow (\ID^\Var\rightarrow\ID)$
that satisfies (\ref{congruence}) is turned into the \phrase{quotient interpretation}
$\den{\_}_\sim: \IT^r(\Sigma) \rightarrow ((\ID/_\sim)^\Var\rightarrow\ID/_\sim)$
by defining $\den{t}_\sim([\rho]_\sim):=[\den{t}(\rho)]_\sim$.
By (\ref{congruence}), this is independent of the choice of a
representative valuation $\rho$ within the equivalence class $[\rho]_\sim$.

Let $\den{\_}$ be an interpretation and $\sim$ an equivalence such that (\ref{congruence}) holds.
Then $\den{\_}$ satisfies  the sanity requirements (\ref{variable interpretation})--(\ref{solutions})
of Section~\ref{sec:semantics} up to $\sim$ iff $\den{\_}_\sim$ satisfies these requirements up to $=$.

\section{Transition System Specifications}\label{sec:TSS}

\begin{definition}{TSS} (\emph{Transition system specification;}
{\sc Groote \& Vaandrager} \cite{GrV92}).
Let $\Sigma$ be an annotated signature and $A$ a set (of \phrase{actions}).
A {\em (positive) $(\Sigma,A)$-literal}\index{literal} is an expression
\plat{t \ar{a} t'} with \plat{t,t'\in \IT^r(\Sigma)} and $a \in A$.
A \phrase{transition rule} over $(\Sigma,A)$ is an expression of the
form \plat{\frac{H}{\lambda}} with $H$ a set of $(\Sigma,A)$-literals (the
\phrase{premises} of the rule) and $\lambda$ a $(\Sigma,A)$-literal (the
\phrase{conclusion}). A rule \plat{\frac{H}{\lambda}} with $H\mathbin=\emptyset$ is
also written $\lambda$.\linebreak[3] A \phrase{transition system
specification (TSS)} is a triple $(\Sigma,A,R)$ with
$R$ a set of transition rules over $(\Sigma,A)$.
\end{definition}
The following definition (from \cite{vG93d}) tells when a literal
is provable from a TSS\@. It generalises the standard definition (see
e.g.\ \cite{GrV92}) by (also) allowing the derivation of transition rules.
The derivation of a literal \plat{t\ar{a}t'} corresponds to the
derivation of the rule \plat{\frac{H}{t\ar{\scriptscriptstyle a}t'}} with $H=\emptyset$.
The case $H \neq \emptyset$ corresponds to the derivation of
\plat{t\ar{a}t'} under the assumptions $H$.

\begin{definition}{proof}({\em Proof\/}).
Let $P=(\Sigma,A,R)$ be a TSS. A \phrase{proof} of a transition rule
\plat{\frac{H}{\lambda}} from $P$ is a well-founded, upwardly
branching tree of which the nodes are labelled by $(\Sigma,A)$-literals,
such that:
\begin{itemize}
\item the root is labelled by $\lambda$, and
\item if $\kappa$ is the label of a node $q$ and $K$ is the set of
labels of the nodes directly above $q$, then
\begin{itemize}
\item either $K=\emptyset$ and $\kappa \in H$,
\item or \plat{\frac{K}{\kappa}} is a substitution instance of a rule from $R$.
\end{itemize}
\end{itemize}
If a proof of $\frac{H}{\lambda}$ from $P$ exists, then
$\frac{H}{\lambda}$ is \phrase{provable} from $P$, denoted $P \vdash \frac{H}{\lambda}$.
\end{definition}
A \phrase{labelled transition system} (LTS) is a triple $(S,A,\rightarrow)$ with
$S$ a set of \phrase{states} or \phrase{processes}, $A$ a set of
\phrase{actions}, and $\mathord\rightarrow \subseteq S \times A \times S$ the 
\phrase{transition relation}, or set of \phrase{transitions}.
A TSS $P=(\Sigma,A,R)$ \emph{specifies} the LTS
$(\T^r(\Sigma),A,\rightarrow)$ whose states are the closed terms over $\Sigma$
and whose transitions are the closed literals provable from $P$.

For the sake of simplicity, the above treatment of TSSs deals with positive premises only.
However, all results of this paper apply equally well, and with unaltered proofs, to TSS with
negative premises $t \!\nar{a}$, following the treatment below. The rest of the section may be
skipped in first reading.

\subsection{TSSs with negative premises}
A \emph{negative} $(\Sigma,A)$-\emph{literal} is an expression \plat{t\nar a}.
A transition rule may have positive and negative literals as premises, but must have a positive conclusion.
Literals \plat{t \ar{a} u} and \plat{t\nar{a}} are said to \emph{deny} each other.

\begin{definition}{wsp} \cite{vG04}
  Let $P=(\Sigma,A,R)$ be a TSS\@. A
  \emph{well-supported proof} from $P$ of a closed literal ${\lambda}$
  is a well-founded tree with the nodes labelled by closed literals,
  such that the root is labelled by $\lambda$, and if $\kappa$ is the
  label of a node and $K$ is the set of labels of the children of this
  node, then:
  \begin{enumerate}
  \item either $\kappa$ is positive and \plat{\frac{K}{\kappa}} is a
    closed substitution instance of a rule in $R$;
  \item or $\kappa$ is negative and for each set $N$ of closed negative
    literals with \plat{\frac{N}{\nu}} provable from
    $P$ and $\nu$ a closed positive literal denying $\kappa$, a literal
    in $K$ denies one in $N$.
  \end{enumerate}
  $P\vdash_{\it ws}\lambda$ denotes that a well-supported proof from
  $P$ of $\lambda$ exists.  A standard TSS $P$ is \emph{complete} if
  for each $p$ and $a$, either
  \plat{P\vdash_{\it ws}p\nar a} or there exists a closed
  term $q$ such that \plat{P\vdash_{\it ws}p\ar a q}.
\end{definition}
In \cite{vG04} it is shown that no TSS admit well-supported proofs of literals that deny each other.
Only a complete TSSs specifies an LTS;
its transitions are the closed positive literals with a well-supported proof.

\section{The Closed-term Semantics of Transition System Specifications}\label{sec:closed-term}

\hypertarget{closed-term}{
The default semantics of a language given as a TSS $(\Sigma,A,R)$ is
to take the domain $\ID$ in which the expressions are interpreted to
be $\T^r(\Sigma)$, the set of closed terms over $\Sigma$.
The meaning of a closed expression $p\in \T^r(\Sigma)$ is simply itself: $\den{p}:=p$.
The meaning $\den{t}\in \ID^\Var\rightarrow\ID$ of an open expression $t\in \IT^r(\Sigma)$
is given by $\den{t}(\rho):=t[\rho]$. Here one uses the fact that a valuation
$\rho:\Var\rightarrow\ID$ is also a closed substitution $\rho:\Var\rightarrow\T^r(\Sigma)$.
Given a semantic equivalence relation ${\sim}\subseteq \T^r(\Sigma)\times \T^r(\Sigma)$,
the closed-term semantics of a TSS always satisfies Requirement~(\ref{variable interpretation}),
whereas (\ref{comp-operators})--(\ref{solutions}) simplify to\vspace{-1ex}
\begin{equation}\tag{$2'$}\label{comp-operators-closed}
p_i\sim q_i ~(\mbox{for all}~i=1,...,n) ~~\Rightarrow~~ f(p_1,...,p_n) \sim f(q_1,...,q_n)~~~~~\mbox{and}
\vspace{-4ex}
\end{equation}
\begin{equation}\tag{$3'$}\label{comp-recursion-closed}
S_Y[\sigma]\sim S_Y'[\sigma] ~(\mbox{for all}~Y\in W~\mbox{and}~ \sigma:W \rightarrow \T^r(\Sigma))
~~\Rightarrow~~ \rec{X|S} \sim \rec{X|S'}
\vspace{-3ex}
\end{equation}
\begin{equation}\tag{$4'$}\label{alpha-conversion1-closed}
\rec{\gamma(X) | S[\gamma]} \sim \rec{X|S}
\vspace{-3ex}
\end{equation}
\begin{equation}\tag{$5'$}\label{solutions-closed}
\rec{X|S} \sim \rec{S_X|S}
\end{equation}
for all functions $(f,n)\in\Sigma$, closed terms $p_i,q_i\in \T^r(\Sigma)$,
recursive specifications $S,S':W \rightarrow \IT^r(\Sigma,W)$ with $X \in W \subseteq \Var$,
and $\gamma:W \rightarrow \Var$ injective.
}

\section{Process Graphs}\label{sec:process graphs}

When the expressions in a language are meant to represent processes,
they are called \phrase{process expressions}, and the language a
\phrase{process description language}.
Suitable domains for interpreting process description languages are
the class of {\em process graphs} \cite{BK84} and its quotients. In such
\phrase{graph domains} a process is represented by either a process graph, or
an equivalence class of process graphs.  Process graphs are also known
as \phrase{state-transition diagrams} or \phrase{automata}.  They are
LTSs equipped with an initial state.
A process graph can also be seen as a state in an LTS\@.

\begin{definition}{graphs}
A \phrase{process graph}, labelled over a set $A$ of actions, is a
triple $G=(S,A,\rightarrow,I)$ with
\begin{itemize}
\item[\bf --] $S$ a set of \emph{nodes} or \phrase{states},
\item[\bf --] $\mathord\rightarrow \subseteq S \times A \times S$ a set of \emph{edges}
	      or \phrase{transitions},
\item[\bf --] and $I \in S$ the \phrase{root} or {\em initial} state\index{initial state}.
\end{itemize}
Let $\IG(A)$ be the domain of process graphs labelled over $A$.
\end{definition}
One writes $r\ar{a}s$ for $(r,a,s)\in\mathord{\rightarrow}$.
Virtually all so-called {\em interleaving models} for the
representation of processes are isomorphic to graph models. For instance,
the {\em failure sets} that represent expressions in the process
description language CSP \cite{BHR84} can easily be encoded as
equivalence classes of graphs, under a suitable equivalence.
In \cite{BK84} the language ACP is equipped with a process graph
semantics, and the semantics of CCS, SCCS and {\sc Meije} given in
\cite{Mi90ccs,Mi83,AB84,dS85} are operational ones, which, as I will
show below, induce process graph semantics.
In the languages $\fL$ studied in this paper, the domain $\ID$ in
which $\fL$-expressions are interpreted will be $\IG(A)$ for some set
of actions $A$, or a subclass of $\IG(A)$.

Usually the parts of a graph that cannot be reached from the initial
state by following a finite path of transitions are considered
meaningless for the description of processes. This means that one is
only interested in process graphs as a model of system behaviour up to
some equivalence, and this equivalence identifies at least graphs with
the same reachable parts.

\begin{definition}{reachable}
The \phrase{reachable part} of a process graph $(S,A,\rightarrow,I)$
is the process graph $(S',A,\rightarrow',I)$ where
\begin{itemize}
\item $S'\subseteq S$ is the smallest
set such that (1) $I\in S'$ and (2) if $r\in S'$ and $r\ar{a}s$ then $s\in S'$,
\item and $\rightarrow'$ is the restriction of $\rightarrow$ to $S'\times A\times S'$.
\end{itemize}
\end{definition}

\section{A Process Graph Semantics of Transition System Specifications}\label{sec:pg semantics}

This section proposes a process graph semantics of TSSs.
For each TSS $P=(\Sigma,A,R)$ it defines an interpretation $\den{\_}_P : \IT^r(\Sigma) \rightarrow (\IG(A)^\Var\rightarrow\IG(A))$,
thus taking $\ID$ to be $\IG(A)$.

\begin{definition}{interpreting}({\em Interpreting
the closed expressions in a TSS as process graphs}).
Let $P=(\Sigma,A,R)$ be a TSS and $p \in \T^r(\Sigma)$. 
Then $\den{p}_P^\emptyset\in\IG(A)$ is
the reachable part of the process graph $(\T^r(\Sigma),A,\rightarrow,p)$
with $\rightarrow$ the set of transitions provable from $P$.
\end{definition}
To define an interpretation $\den{\_}_P : \IT^r(\Sigma) \rightarrow (\IG(A)^\Var\rightarrow\IG(A))$
of the open $\Sigma$-terms in $\IG(A)$ I would like to simply add to the signature $\Sigma$ 
a constant $G$ for each process graph $G\in\IG(A)$. However, $\IG(A)$
is a proper class, whereas a signature needs to be a set. For this reason I work
with appropriate subsets $\IG^*$ of $\IG(A)$ instead of with $\IG(A)$ itself.
I will discuss the selection of $\IG^*$ later, but one requirement will be\\[1ex]
\mbox{}\hfill if $(S,A,\rightarrow,r)\in \IG^*$ and $r\ar{a}s$ then also $(S,A,\rightarrow,s)\in \IG^*$
\hfill (\phrase{transition closure}).\\[1ex]
Define the transition relation $\mathord{\rightarrow_{\IGs^*}} \subseteq \IG^* \times A \times \IG^*$
by $G\ar{a}_{\IGs^*}G'$ iff
(i) $G=(S,A,\rightarrow,r)$, (ii) there is a transition $(r,a,s)\mathbin\in \mathord{\rightarrow}$, and
(iii) $G' = (S,A,\rightarrow,s)$ is the same graph but with $s$ as initial state.

Now consider a term $t\in\IT^r(\Sigma)$ and a valuation $\rho:\Var\rightarrow\IG(A)$.
In order to define $\den{t}_P(\rho)$, make sure that $\IG^*$ \phrase{supports} $(t,\rho)$, meaning
that it contains $\rho(x)$ for any variable $x\in \Var$ occurring free in $t$.  Let \plat{P+\IG^*}
be the TSS $P$ to which all graphs \plat{G\in\IG^*} have been added as
constants, and all transitions in \plat{\rightarrow_{\IGs^*}} as
transition rules without premises.
As the valuation $\rho$ now also is a substitution, $t[\rho]$ is a closed term
in the TSS \plat{P+\IG^*}.
Define \plat{\den{t}_P^{\IGs^*}(\rho)} to be $\den{t[\rho]}_{P+\IGs^*}^\emptyset \in \IG(A)$: the interpretation
according to \df{interpreting} of the closed term $t[\rho]$ from the TSS $P+\IG^*$.

\begin{definition}{simply interpreting}({\em The simple process graph semantics of TSSs}).
A TSS $P$ \phrase{manifestly induces a process graph semantics} iff, for any term $t$ and valuation $\rho$,
the interpretation \plat{\den{t}_P^{\IGs^*}(\rho)} is independent of the choice of $\IG^*\!$, as
long as $\IG^*$ is transition-closed and supports $(t,\rho)$.
In that case $\den{t}_P(\rho)$ is defined to be $\den{t}_P^{\IGs^*}(\rho)$.
\end{definition}
It is possible to enlarge the class of TSSs that induce a process graph semantics a little bit:
\begin{definition}{strongly interpreting}({\em The process graph semantics of TSSs}).
Call a choice of $\IG^*$ \phrase{adequate} for (the interpretation of)
$\den{t}_P(\rho)$ if $\IG^*$ is transition-closed and supports $(t,\rho)$, and
\plat{\den{t}_P^{\IGs'}(\rho) = \den{t}_P^{\IGs^*}(\rho)} for any transition-closed superset $\IG'$ of $\IG^*$.
Now $P$ \phrase{induces a process graph semantics} iff, for any term $t$ and valuation $\rho$,
an adequate choice $\IG^*$ for $\den{t}_P(\rho)$ exists.
In that case $\den{t}_P(\rho)$ is defined to be \plat{\den{t}_P^{\IGs^*}(\rho)}
for any adequate choice of $\IG^*$.
\end{definition}

\begin{trivlist} \item[\hspace{\labelsep}{\bf \ex{context dependence} (continued)}]
Reconsider the operators $f$ and $id$ from \ex{context dependence}. To judge whether they are
essentially different one compares the open terms $f(x)$ and $id(x)$. Their meanings are values that
depend on the choice of a valuation $\rho$, mapping variables to values. In fact they depend on the
value $\rho(x)$ only.

Under the closed term interpretation of the TSS $P$ of \ex{context dependence},
$\rho(x)$ is a closed term in the language; it cannot have an outgoing $\tau$-transition.
Thus $\den{f(x)}_P(\rho) \bis{} \den{x}_P(\rho) \bis{} \den{id(x)}_P(\rho)$ for any such $\rho$,
so $\den{f(x)}_P \bis{} \den{x}_P \bis{} \den{id(x)}_P$, i.e., $f$ and $id$ are strongly bisimilar.

However, under the process graph interpretation, $\rho(x)$ is a process graph, and
one may take $\rho(x)$ to be
\expandafter\ifx\csname graph\endcsname\relax
   \csname newbox\expandafter\endcsname\csname graph\endcsname
\fi
\ifx\graphtemp\undefined
  \csname newdimen\endcsname\graphtemp
\fi
\expandafter\setbox\csname graph\endcsname
 =\vtop{\vskip 0pt\hbox{%
    \special{pn 8}%
    \special{ar 197 39 39 39 0 6.28319}%
    \special{sh 1.000}%
    \special{pn 1}%
    \special{pa 79 14}%
    \special{pa 157 39}%
    \special{pa 79 64}%
    \special{pa 79 14}%
    \special{fp}%
    \special{pn 8}%
    \special{pa 0 39}%
    \special{pa 79 39}%
    \special{fp}%
    \special{ar 591 39 39 39 0 6.28319}%
    \special{sh 1.000}%
    \special{pn 1}%
    \special{pa 472 14}%
    \special{pa 551 39}%
    \special{pa 472 64}%
    \special{pa 472 14}%
    \special{fp}%
    \special{pn 8}%
    \special{pa 236 39}%
    \special{pa 472 39}%
    \special{fp}%
    \graphtemp=\baselineskip
    \multiply\graphtemp by -1
    \divide\graphtemp by 2
    \advance\graphtemp by .5ex
    \advance\graphtemp by 0.039in
    \rlap{\kern 0.394in\lower\graphtemp\hbox to 0pt{\hss $\tau$\hss}}%
    \special{ar 984 39 39 39 0 6.28319}%
    \special{sh 1.000}%
    \special{pn 1}%
    \special{pa 866 14}%
    \special{pa 945 39}%
    \special{pa 866 64}%
    \special{pa 866 14}%
    \special{fp}%
    \special{pn 8}%
    \special{pa 630 39}%
    \special{pa 866 39}%
    \special{fp}%
    \graphtemp=\baselineskip
    \multiply\graphtemp by -1
    \divide\graphtemp by 2
    \advance\graphtemp by .5ex
    \advance\graphtemp by 0.039in
    \rlap{\kern 0.787in\lower\graphtemp\hbox to 0pt{\hss $c$\hss}}%
    \hbox{\vrule depth0.079in width0pt height 0pt}%
    \kern 1.024in
  }%
}%

\,\raisebox{11.6pt}[0pt][0pt]{\box\graph}\;,
where the short arrow indicates the initial state.
With this valuation, the process graph $\den{id(x)}_P(\rho)$ is isomorphic to $\rho(x)$, whereas $\den{f(x)}_P$ has no
outgoing transitions. So $\den{f(x)}_P(\rho) \nobis \den{x}_P(\rho) \bis{} \den{id(x)}_P(\rho)$ and
consequently $\den{f(x)}_P \nobis \den{x}_P \bis{} \den{id(x)}_P$, i.e., $f$ and $id$ are not bisimilar.

The smallest set of process graphs $\IG^*$ that is adequate for the interpretation of $\den{f(x)}_P(\rho)$
and $\den{id(x)}_P(\rho)$ is $\left\{\rule{0pt}{12pt}
\expandafter\ifx\csname graph\endcsname\relax
   \csname newbox\expandafter\endcsname\csname graph\endcsname
\fi
\ifx\graphtemp\undefined
  \csname newdimen\endcsname\graphtemp
\fi
\expandafter\setbox\csname graph\endcsname
 =\vtop{\vskip 0pt\hbox{%
    \special{pn 8}%
    \special{ar 39 197 39 39 0 6.28319}%
    \special{sh 1.000}%
    \special{pn 1}%
    \special{pa 64 79}%
    \special{pa 39 157}%
    \special{pa 14 79}%
    \special{pa 64 79}%
    \special{fp}%
    \special{pn 8}%
    \special{pa 39 0}%
    \special{pa 39 79}%
    \special{fp}%
    \special{ar 433 197 39 39 0 6.28319}%
    \special{sh 1.000}%
    \special{pn 1}%
    \special{pa 315 172}%
    \special{pa 394 197}%
    \special{pa 315 222}%
    \special{pa 315 172}%
    \special{fp}%
    \special{pn 8}%
    \special{pa 79 197}%
    \special{pa 315 197}%
    \special{fp}%
    \graphtemp=\baselineskip
    \multiply\graphtemp by -1
    \divide\graphtemp by 2
    \advance\graphtemp by .5ex
    \advance\graphtemp by 0.197in
    \rlap{\kern 0.236in\lower\graphtemp\hbox to 0pt{\hss $\tau$\hss}}%
    \special{ar 827 197 39 39 0 6.28319}%
    \special{sh 1.000}%
    \special{pn 1}%
    \special{pa 709 172}%
    \special{pa 787 197}%
    \special{pa 709 222}%
    \special{pa 709 172}%
    \special{fp}%
    \special{pn 8}%
    \special{pa 472 197}%
    \special{pa 709 197}%
    \special{fp}%
    \graphtemp=\baselineskip
    \multiply\graphtemp by -1
    \divide\graphtemp by 2
    \advance\graphtemp by .5ex
    \advance\graphtemp by 0.197in
    \rlap{\kern 0.630in\lower\graphtemp\hbox to 0pt{\hss $c$\hss}}%
    \hbox{\vrule depth0.236in width0pt height 0pt}%
    \kern 0.866in
  }%
}%

~\,\raisebox{10.6pt}[0pt][0pt]{\box\graph}\;~,
\expandafter\ifx\csname graph\endcsname\relax
   \csname newbox\expandafter\endcsname\csname graph\endcsname
\fi
\ifx\graphtemp\undefined
  \csname newdimen\endcsname\graphtemp
\fi
\expandafter\setbox\csname graph\endcsname
 =\vtop{\vskip 0pt\hbox{%
    \special{pn 8}%
    \special{ar 39 197 39 39 0 6.28319}%
    \special{ar 433 197 39 39 0 6.28319}%
    \special{sh 1.000}%
    \special{pn 1}%
    \special{pa 458 79}%
    \special{pa 433 157}%
    \special{pa 408 79}%
    \special{pa 458 79}%
    \special{fp}%
    \special{pn 8}%
    \special{pa 433 0}%
    \special{pa 433 79}%
    \special{fp}%
    \special{sh 1.000}%
    \special{pn 1}%
    \special{pa 315 172}%
    \special{pa 394 197}%
    \special{pa 315 222}%
    \special{pa 315 172}%
    \special{fp}%
    \special{pn 8}%
    \special{pa 79 197}%
    \special{pa 315 197}%
    \special{fp}%
    \graphtemp=\baselineskip
    \multiply\graphtemp by -1
    \divide\graphtemp by 2
    \advance\graphtemp by .5ex
    \advance\graphtemp by 0.197in
    \rlap{\kern 0.236in\lower\graphtemp\hbox to 0pt{\hss $\tau$\hss}}%
    \special{ar 827 197 39 39 0 6.28319}%
    \special{sh 1.000}%
    \special{pn 1}%
    \special{pa 709 172}%
    \special{pa 787 197}%
    \special{pa 709 222}%
    \special{pa 709 172}%
    \special{fp}%
    \special{pn 8}%
    \special{pa 472 197}%
    \special{pa 709 197}%
    \special{fp}%
    \graphtemp=\baselineskip
    \multiply\graphtemp by -1
    \divide\graphtemp by 2
    \advance\graphtemp by .5ex
    \advance\graphtemp by 0.197in
    \rlap{\kern 0.630in\lower\graphtemp\hbox to 0pt{\hss $c$\hss}}%
    \hbox{\vrule depth0.236in width0pt height 0pt}%
    \kern 0.866in
  }%
}%

~\,\raisebox{10.6pt}[0pt][0pt]{\box\graph}\;~,
\expandafter\ifx\csname graph\endcsname\relax
   \csname newbox\expandafter\endcsname\csname graph\endcsname
\fi
\ifx\graphtemp\undefined
  \csname newdimen\endcsname\graphtemp
\fi
\expandafter\setbox\csname graph\endcsname
 =\vtop{\vskip 0pt\hbox{%
    \special{pn 8}%
    \special{ar 39 197 39 39 0 6.28319}%
    \special{ar 433 197 39 39 0 6.28319}%
    \special{sh 1.000}%
    \special{pn 1}%
    \special{pa 315 172}%
    \special{pa 394 197}%
    \special{pa 315 222}%
    \special{pa 315 172}%
    \special{fp}%
    \special{pn 8}%
    \special{pa 79 197}%
    \special{pa 315 197}%
    \special{fp}%
    \graphtemp=\baselineskip
    \multiply\graphtemp by -1
    \divide\graphtemp by 2
    \advance\graphtemp by .5ex
    \advance\graphtemp by 0.197in
    \rlap{\kern 0.236in\lower\graphtemp\hbox to 0pt{\hss $\tau$\hss}}%
    \special{ar 827 197 39 39 0 6.28319}%
    \special{sh 1.000}%
    \special{pn 1}%
    \special{pa 852 79}%
    \special{pa 827 157}%
    \special{pa 802 79}%
    \special{pa 852 79}%
    \special{fp}%
    \special{pn 8}%
    \special{pa 827 0}%
    \special{pa 827 79}%
    \special{fp}%
    \special{sh 1.000}%
    \special{pn 1}%
    \special{pa 709 172}%
    \special{pa 787 197}%
    \special{pa 709 222}%
    \special{pa 709 172}%
    \special{fp}%
    \special{pn 8}%
    \special{pa 472 197}%
    \special{pa 709 197}%
    \special{fp}%
    \graphtemp=\baselineskip
    \multiply\graphtemp by -1
    \divide\graphtemp by 2
    \advance\graphtemp by .5ex
    \advance\graphtemp by 0.197in
    \rlap{\kern 0.630in\lower\graphtemp\hbox to 0pt{\hss $c$\hss}}%
    \hbox{\vrule depth0.236in width0pt height 0pt}%
    \kern 0.866in
  }%
}%

~\,\raisebox{10.6pt}[0pt][0pt]{\box\graph}\;~\right\}$.
\end{trivlist}

\begin{example}{SOS graph model 2}
A TSS with a rule $c\ar{a}x$ (without premises)
does not induce a process graph semantics.
Namely, no matter which set $\IG^*$ one takes, there is always a larger set $\IG'$ and a process
graph $G \mathbin\in \IG'\setminus\IG^*\!$. Thus, the process graph $\den{c}_P^{\IGs'}$ has the
transition $c \ar{a} G$ but $\den{c}_P^{\IGs^*}$ does not. Hence $\IG^*$ is not adequate.
Consequently, the TSS does not induce a process graph semantics.
\end{example}

\begin{example}{SOS graph model 1}
Let $P$ be the TSS with constants $c$ and $0$ and as only rule
$$\frac{x\ar{a}y~~z\ar{b}y}{c\ar{a}0}\;.$$
Then $\den{c}_P^{\IGs^*}(\rho)$ is independent of $\rho$, since $c$ is a closed term.
In any adequate choice of $\IG^*$ there is a graph in which an
$a$-transition and a $b$-transition end in a common state, and
using such a choice one finds that \plat{\den{c}_P^{\IGs^*}} has the transition $c\ar{a}0$.
However, when taking $\IG^*$ to be a set of trees,
\plat{\den{c}_P^{\IGs^*}} has no transitions.

$P$ induces a process graph semantics according to \df{strongly interpreting},
but not according to \df{simply interpreting}.
\end{example}
If we stick with \df{simply interpreting}, $\den{t}_P=\den{t}_P^\emptyset$ for closed terms $t$.

\begin{definition}{pure}
The \phrase{rule-bound variables} of a transition rule \plat{\frac{H}{t\ar{\scriptscriptstyle a}t'}} form the smallest set $B$ such
that
\begin{itemize}
\item $\var(t) \subseteq B$, and
\item if $u\ar{b}u'$ is a premise in $H$ and $\var(u)\subseteq B$ then $\var(u')\subseteq B$.
\end{itemize}
A TSS is called \phrase{pure} if all variables occurring free in one of its rules are rule-bound in that rule.
\end{definition}
This concept of a pure TSS generalises the one from \cite{GrV92}, and coincides with it for TSSs in
the tyft/tyxt format studied in \cite{GrV92}. The TSSs of all common process algebras are pure.
So is the TSS of \ex{context dependence}, but
the ones of Examples~\ref{ex:SOS graph model 2} and~\ref{ex:SOS graph model 1} are not. 

\begin{proposition}{pure}
Any pure TSS manifestly induces a process graph semantics.
\end{proposition}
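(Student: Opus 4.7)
The plan is to show that for any two transition-closed subsets $\IG^*,\IG^{**}$ of $\IG(A)$ supporting $(t,\rho)$, the interpretations $\den{t}_P^{\IGs^*}(\rho)$ and $\den{t}_P^{\IGs^{**}}(\rho)$ agree. The union $\IG^*\cup\IG^{**}$ is still transition-closed and supports $(t,\rho)$, so it suffices to show that whenever $\IG^*\subseteq \IG'$ are transition-closed and $\IG^*$ supports $(t,\rho)$, one has $\den{t}_P^{\IGs^*}(\rho) = \den{t}_P^{\IGs'}(\rho)$. Both sides are the reachable part of an LTS starting at the closed term $t[\rho]$, which uses only $\IG^*$-constants; so I need only verify that the LTS specified by $P+\IG'$, restricted to states reachable from $t[\rho]$, coincides with the one specified by $P+\IG^*$.

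The crucial lemma, to be proved by induction on the height of a $P+\IG'$-proof $\pi$ of a closed literal $s \ar{a} s'$, is: if $s$ uses only $\IG^*$-constants, then so does $s'$, and $\pi$ can be replayed using only rules of $P+\IG^*$. An axiom coming from $\IG'$ is of the form $G \ar{a}_{\IGs'} G'$ with $s=G$; since $G \in \IG^*$, transition-closure of $\IG^*$ delivers $G' \in \IG^*$ and the same axiom already sits inside $P+\IG^*$. For a rule $\frac{H}{t\ar{a}t'}\in R$ instantiated by a closed substitution $\sigma$ with $s = t[\sigma]$ (covering both premiseless rules and the inductive case), a secondary induction on the inductive generation of the set $B$ of rule-bound variables shows that $\sigma(x)$ uses only $\IG^*$-constants for every $x \in B$: for $x\in\var(t)$ this is immediate since $\sigma(x)$ is a subterm of $s$; and for $x \in \var(u')$ with $u\ar{b}u' \in H$ and $\var(u)\subseteq B$ already handled, the inner IH makes $u[\sigma]$ a term over $\IG^*$-constants, whereupon the outer IH applied to the sub-proof of $u[\sigma]\ar{b}u'[\sigma]$ gives the same for $u'[\sigma]$, and hence for $\sigma(x)$.

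Purity then clinches the argument: it forces $\var(t')$ and every $\var(u),\var(u')$ for $u\ar{b}u' \in H$ into $B$, so $s' = t'[\sigma]$ uses only $\IG^*$-constants, and each premise-instance $u[\sigma]\ar{b}u'[\sigma]$ can, by the outer IH, be replayed in $P+\IG^*$; reassembling those sub-proofs below the rule instance yields a $P+\IG^*$-proof of $s\ar{a}s'$. The reverse direction is immediate, since the rules of $P+\IG^*$ are a subset of those of $P+\IG'$. Hence the reachable LTSs from $t[\rho]$ coincide, and the two interpretations are equal.

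The main obstacle will be organising the two nested inductions cleanly and ensuring that the order in which variables enter $B$ matches the order in which sub-proofs become available to the outer IH. Without purity this coordination collapses: a free variable in $t'$ or in the source $u$ of some premise might never be reached by the iterative construction of $B$, and then $\sigma$ could silently place an $\IG'\setminus\IG^*$-constant into $s'$ or into a subsequent premise beyond $\IG^*$'s control—exactly the failure mode illustrated by \ex{SOS graph model 2} and \ex{SOS graph model 1}.
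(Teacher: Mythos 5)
Your proof is correct and follows essentially the same route as the paper's: an outer induction on the size/height of proofs of transitions reachable from $t[\rho]$, with a nested induction on the inductive generation of the set of rule-bound variables, using purity to conclude that targets and premises never escape the supporting set of graph constants. Your reduction to the case $\IG^*\subseteq\IG'$ via unions is a cosmetic variant of the paper's use of the smallest transition-closed supporting set $\IG^*_0$; you simply spell out the details the paper leaves as ``fairly straightforward''.
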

\begin{proof}
For a given term $t\in\IT^r(\Sigma)$ and valuation $\rho:\Var\rightarrow\IG(A)$,
let $\IG^*_0$ be the smallest set of process graphs that is transition-closed and supports $(t,\rho)$.
Then for any $\IG^*\supseteq \IG^*_0$ and any transition $t[\rho] \ar{a} u$ provable from $P+\IG^*$,
any term $v$ occurring in a proof of this transition, including the target $u$,
has the form $t'[\rho]$ with $t'\in\IT^r(\Sigma)$, such that $\rho(x)\in\IG^*_0$ for any $x\in\var(t)$.
This follows by a fairly straightforward induction on the size of proofs, with a nested induction on the
derivation of rule-boundedness. As a consequence, the process graph $\den{t[\rho]}_{P+\IGs^*}^\emptyset$
does not depend in any way on $\IG^*\setminus \IG^*_0$.
\end{proof}

\paragraph{Remark} One may wonder whether the above treatment can be simplified by skipping, in
\df{interpreting}, ``the reachable part of''. The answer is negative, for in that case 
\plat{\den{t}_P^{\IGs^*}(\rho)} would never be independent of the choice of $\IG^*\!$,
because all $G \in \IG^*$ would occur as (unreachable) states in the process graph \plat{\den{t}_P^{\IGs^*}(\rho)}.

\paragraph{Summary} In this section, terms in a TSS are interpreted in the domain of process graphs as follows.
Let $P=(\Sigma,A,R)$ be a TSS and $t\in\IT^r(\Sigma)$ a term. The meaning $\den{t}_P:\IG(A)^\Var\!\rightarrow\IG(A)$
of $t$ is given by \plat{\den{t}_P(\rho) := \den{t[\rho]}_{P+\IGs^*}^\emptyset}, with $\IG^*$ adequate for $\den{t}_P(\rho)$.
Here $\IG^*$ is transition-closed and supports $(t,\rho)$; it is \phrase{adequate} if further
increasing this set does not alter the definition of $\den{t}_P(\rho)$. If no adequate $\IG^*$
exists, $\den{t}_P(\rho)$ remains undefined, and $P$ does not induce a process graph semantics.
If $P$ is pure, any transition-closed set $\IG^*\subseteq \IG(A)$ supporting $(t,\rho)$ is adequate.

\section{Lifting Semantic Equivalences to Open Terms}\label{sec:lifting}

The following definition shows how any equivalence relation $\sim$ defined on a domain $\ID$
in which a language $\fL$ is interpreted, lifts to the open terms of $\fL$.

\begin{definition}{lifting}({\em Lifting equivalences to open terms}).
For a language $\fL$, given as an annotated signature $\Sigma$
and an interpretation $\den{\_}:\IT^r(\Sigma) \rightarrow  (\ID^\Var\rightarrow\ID)$,
and an equivalence relation $\sim$ on $\ID$, write $t \sim_\fL u$ for $t,u\in\IT^r(\Sigma)$
iff $\den{t}(\rho)\sim\den{u}(\rho)$ for all valuations $\rho:\Var\rightarrow\ID$.
\end{definition}
This definition can be applied to any language $\fL$ given by a TSS $P=(\Sigma,A,R)$.
In this case $\sim$ must be defined on $\IG(A)$. Write $\sim^{\textbf{\textsf{pg}}}_P$ for $\sim_\fL$ as defined above
when taking as interpretation the process graph semantics
$\den{\_}:\IT^r(\Sigma) \rightarrow  (\IG(A)^\Var\rightarrow\IG(A))$ of $\fL$.

An equivalence $\sim$ on $\IG(A)$ also lifts to the closed terms $\T^r(\Sigma)$ of $\fL$.
Namely, let $(\T^r(\Sigma),A,\rightarrow)$ be the LTS specified by $P$ as defined in \Sec{TSS}.
Then $(\T^r(\Sigma),A,\rightarrow,p)\in \IG(A)$ is a process graph for any $p\in\T^r(\Sigma)$.
Now write $p \sim q$, for $p,q\in\T^r(\Sigma)$, whenever $(\T^r(\Sigma),A,\rightarrow,p) \sim (\T^r(\Sigma),A,\rightarrow,p)$.

Using this, \df{lifting} can also be instantiated by taking  as interpretation
the closed-term semantics $\den{\_}:\IT^r(\Sigma) \rightarrow (\T^r(\Sigma)^\Var\rightarrow\T^r(\Sigma))$
of $\fL$, as defined in \Sec{closed-term}. Write $\sim^ {\textbf{\textsf{ci}}}_P$ for $\sim_\fL$ defined thusly.
So
\begin{center}
  $t \sim^ {\textbf{\textsf{ci}}}_P u$ iff $t[\sigma] \sim u[\sigma]$ for any closed substitution $\sigma$,
\end{center}
i.e., two open terms are related by $\sim^ {\textbf{\textsf{ci}}}_P$ if all of their \emph{closed instantiations} are
related by $\sim$.

Having lifted semantic equivalences $\sim$ from process graphs to open terms in two ways,
one wonders how the resulting equivalences compare. Instantiating $\sim$ with strong bisimilarity,
$\bis{}$, \ex{context dependence} shows that $f(x) \bis{}^ {\textbf{\textsf{ci}}}_P id(x)$ yet $f(x) \mathrel{\nobis{}^ {\textbf{\textsf{pg}}}_P} id(x)$.
For the other direction consider the TSS with constants $0$, $c$ and $d$, alphabet $\{a,b\}$, and the
rules\vspace{-2ex} $$d\ar{a}0 \qquad \frac{x\ar{b} y}{c\ar{a} 0}\;.$$
Under the closed-term interpretation, no $b$-transition from any term can be derived, so
$0 \bis{}^ {\textbf{\textsf{ci}}} c \mathrel{\nobis{}^ {\textbf{\textsf{ci}}}} d$.
Yet under the process graph interpretation, since there exists some graph that can do a
$b$-transition, one has $c\ar{a}0$, and obtains $0 \mathrel{\nobis{}^ {\textbf{\textsf{pg}}}} c \bis{}^ {\textbf{\textsf{pg}}} d$.
So in general $\bis{}^ {\textbf{\textsf{ci}}}$ and $\bis{}^ {\textbf{\textsf{pg}}}$ are incomparable.

The above TSS is not pure; the variable $x$ is not rule-bound. For pure TSSs no such example exists.
\begin{theorem}{finer}
Let $P=(\Sigma,A,R)$ be a pure TSS and $\approx$ an equivalence on $\IG(A)$ that relates each process
graph with its reachable part. Moreover, let ${\sim} \subseteq {\approx}$ be a possibly finer, or more
discriminating, equivalence that satisfies requirements (\ref{variable interpretation})--(\ref{alpha-conversion1})
of \Sec{requirements}. Then $t\approx^ {\textbf{\textsf{pg}}}_P u$ implies $t\approx^ {\textbf{\textsf{ci}}}_P u$.
\end{theorem}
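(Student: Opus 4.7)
The plan is to reduce the closed instantiations appearing in the definition of $\approx^{\textbf{\textsf{ci}}}_P$ to valuations that can be fed into the hypothesis $t \approx^{\textbf{\textsf{pg}}}_P u$. Fix an arbitrary closed substitution $\sigma:\Var\rightarrow\T^r(\Sigma)$ and define the valuation $\rho_\sigma:\Var\rightarrow\IG(A)$ by $\rho_\sigma(x):=\den{\sigma(x)}_P^\emptyset$. A preliminary observation is that, because $P$ is pure and each $\sigma(x)$ is closed, the empty set of graphs is trivially transition-closed and vacuously supports $(\sigma(x),\rho)$ for any $\rho$, so by \pr{pure} it is adequate; hence $\den{\sigma(x)}_P(\rho)=\den{\sigma(x)}_P^\emptyset=\rho_\sigma(x)$ for every $\rho$. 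In other words, the lifting of $\sigma$ to a map between valuations (from \Sec{substitutions}) satisfies $\den{\sigma}_P(\rho)=\rho_\sigma$ for every $\rho$.

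Next, I invoke \pr{inductive meaning}, which applies because $\sim$ is assumed to make the process graph interpretation of $P$ satisfy (\ref{variable interpretation})--(\ref{alpha-conversion1}). It yields $\den{t[\sigma]}_P(\rho)\sim\den{t}_P(\den{\sigma}_P(\rho))=\den{t}_P(\rho_\sigma)$, and the same adequacy argument applied to the closed term $t[\sigma]$ gives $\den{t[\sigma]}_P(\rho)=\den{t[\sigma]}_P^\emptyset$, so $\den{t[\sigma]}_P^\emptyset\sim\den{t}_P(\rho_\sigma)$; symmetrically, $\den{u[\sigma]}_P^\emptyset\sim\den{u}_P(\rho_\sigma)$. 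The hypothesis $t\approx^{\textbf{\textsf{pg}}}_P u$ instantiated at $\rho_\sigma$ says $\den{t}_P(\rho_\sigma)\approx\den{u}_P(\rho_\sigma)$; combining with ${\sim}\subseteq{\approx}$ and transitivity of $\approx$ one obtains $\den{t[\sigma]}_P^\emptyset\approx\den{u[\sigma]}_P^\emptyset$.

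Finally, this process-graph statement must be translated back into the closed-term lifting. By \df{interpreting}, $\den{t[\sigma]}_P^\emptyset$ is precisely the reachable part of the process graph $(\T^r(\Sigma),A,{\rightarrow},t[\sigma])$ cut out of the LTS that $P$ specifies, and likewise for $u[\sigma]$. Since $\approx$ relates every process graph to its reachable part, two further applications of transitivity give $(\T^r(\Sigma),A,{\rightarrow},t[\sigma])\approx(\T^r(\Sigma),A,{\rightarrow},u[\sigma])$, which by \df{lifting} is exactly $t[\sigma]\approx u[\sigma]$. Since $\sigma$ was arbitrary, $t\approx^{\textbf{\textsf{ci}}}_P u$.

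The main technical nuisance, rather than a deep obstacle, is keeping the three meanings of the semantic bracket straight---the general $\den{\_}_P$ from \df{strongly interpreting}, its closed-term specialisation $\den{\_}_P^\emptyset$ from \df{interpreting}, and the auxiliary $\den{\_}_{P+\IG^*}^\emptyset$ used to define it---and verifying that for closed arguments in a pure TSS they all coincide, so that \pr{inductive meaning} and the reachable-part property of $\approx$ can be chained together without slippage. Purity is what allows us to take $\IG^*=\emptyset$ in the adequacy condition and thereby trivialise the distinction.
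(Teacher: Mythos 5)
Your proposal is correct and follows essentially the same route as the paper's own proof: the valuation $\rho_\sigma$ you construct is exactly the paper's $\den{\sigma}_P$, and both arguments chain \pr{inductive meaning}, the adequacy of $\IG^*=\emptyset$ for closed terms in a pure TSS (\pr{pure}), the hypothesis instantiated at that valuation, and the reachable-part property of $\approx$ in the same order. Your explicit remark that $\den{\sigma}_P(\rho)$ is independent of $\rho$ for a closed substitution is a small point the paper leaves implicit, but it does not change the argument.
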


\begin{proof}
  Suppose $t\approx^ {\textbf{\textsf{pg}}}_P u$, and let $\sigma:\Var\rightarrow\T^r(\Sigma)$ be a closed substitution.
  It suffices to establish that $t[\sigma] \approx u[\sigma]$.
  Let $\den{\sigma}_P: \Var \rightarrow \IG(A)$ be the valuation defined by $\den{\sigma}_P(X) := \den{\sigma(X)}_P$.
  This is the definition of $\den{\sigma}_P$ from \Sec{substitutions}, specialised to closed substitutions.
  Here $\den{q}_P$, for $q\in\T^r(\Sigma)$, is the process graph semantics of $q$ as defined in \Sec{pg semantics},
  so \plat{\den{q}_P = \den{q}^\emptyset_{P+\IGs^*}} for an adequate choice of $\IG^*$.
  Since $P$ is pure and $q$ closed, the empty set of process graphs is adequate by \pr{pure}.
  By \df{interpreting}, $\den{q}^\emptyset_P$ is the reachable part of the process graph $(\T^r(\Sigma),A,\rightarrow,p)$,
  so $\den{q}^\emptyset_P \approx (\T^r(\Sigma),A,\rightarrow,p)$.
  Since $t\approx^ {\textbf{\textsf{pg}}}_P u$, one has $\den{t}_p(\den{\sigma}_P) \approx \den{u}_p(\den{\sigma}_P)$ by \df{lifting}.
  Moreover, $\den{t}_p(\den{\sigma}_P) \sim \den{t[\sigma]}_P$ by \pr{inductive meaning}.
  Hence $$\den{t[\sigma]}^\emptyset_{P}  = \den{t[\sigma]}_P \approx \den{t}_p(\den{\sigma}_P)
  \approx \den{u}_p(\den{\sigma}_P) \approx \den{u[\sigma]}_P = \den{u[\sigma]}^\emptyset_{P}$$
  and thus $(\T^r(\Sigma),A,\rightarrow,t[\sigma]) \approx \den{t[\sigma]}^\emptyset_{P+\IGs^*}
  \approx \den{u[\sigma]}^\emptyset_{P} \approx (\T^r(\Sigma),A,\rightarrow,u[\sigma])$,
  i.e., $t[\sigma] \approx u[\sigma]$.
\end{proof}
So, under the conditions of \thm{finer}, $\approx^ {\textbf{\textsf{pg}}}_P$ is a finer, or more discriminating, equivalence than $\approx^ {\textbf{\textsf{ci}}}_P$.

\section{Congruence Properties}\label{sec:congruence}

Whether a semantic equivalence $\sim$ is a congruence (cf.\ (\ref{congruence}) in \Sec{quotient}) may depend on whether the closed-term or the
process graph semantics is chosen. The following example illustrates this for a practical process algebra.

\begin{example}{congruence}
Consider the TSS with constants $1$ and $\alpha$ for $\alpha\in Act = A\uplus\{\tau\}$ and binary
operators $+$ and $;$, denoting choice and sequencing in a process algebra,
with the following transition rules:
\[\quad \alpha \ar{\alpha} 1 \qquad
  \frac{x \ar{\alpha} x'}{x+y \ar{\alpha} x'} \qquad
  \frac{y \ar{\alpha} y'}{x+y \ar{\alpha} y'} \qquad
  \frac{x \ar{\alpha} x'}{x;y \ar{\alpha} x';y} \qquad
  \frac{x \nar{\alpha} ~\mbox{for all}~ \alpha\in Act \quad y \ar{\beta} y'}{x;y \ar{\beta} y'} \qquad
\]
The process 1, like 0 in CCS, has no outgoing transitions, meaning that it performs no actions.
The sequencing operator performs all actions its first argument can do, until its first argument can
perform no further actions; then it continues with its second argument. I employ no recursion here.

As equivalence relation $\sim$ I take weak bisimulation equivalence, $\bis{w}$, as defined in \cite{Mi90ccs,vG00}.
For the term $t$ from (\ref{congruence}) take $x;b$. Let $\rho$ and $\nu$ be valuations with
\begin{center}
\expandafter\ifx\csname graph\endcsname\relax
   \csname newbox\expandafter\endcsname\csname graph\endcsname
\fi
\ifx\graphtemp\undefined
  \csname newdimen\endcsname\graphtemp
\fi
\expandafter\setbox\csname graph\endcsname
 =\vtop{\vskip 0pt\hbox{%
    \special{pn 8}%
    \special{ar 390 65 65 65 0 6.28319}%
    \special{sh 1.000}%
    \special{pn 1}%
    \special{pa 225 40}%
    \special{pa 325 65}%
    \special{pa 225 90}%
    \special{pa 225 40}%
    \special{fp}%
    \special{pn 8}%
    \special{pa 0 65}%
    \special{pa 225 65}%
    \special{fp}%
    \special{ar 1039 65 65 65 0 6.28319}%
    \special{sh 1.000}%
    \special{pn 1}%
    \special{pa 874 40}%
    \special{pa 974 65}%
    \special{pa 874 90}%
    \special{pa 874 40}%
    \special{fp}%
    \special{pn 8}%
    \special{pa 455 65}%
    \special{pa 874 65}%
    \special{fp}%
    \graphtemp=\baselineskip
    \multiply\graphtemp by -1
    \divide\graphtemp by 2
    \advance\graphtemp by .5ex
    \advance\graphtemp by 0.065in
    \rlap{\kern 0.714in\lower\graphtemp\hbox to 0pt{\hss $a$\hss}}%
    \hbox{\vrule depth0.130in width0pt height 0pt}%
    \kern 1.104in
  }%
}%

$\rho(x) = ~\,\raisebox{13.6pt}[0pt][0pt]{\box\graph}\;~$ \qquad and \qquad
\expandafter\ifx\csname graph\endcsname\relax
   \csname newbox\expandafter\endcsname\csname graph\endcsname
\fi
\ifx\graphtemp\undefined
  \csname newdimen\endcsname\graphtemp
\fi
\expandafter\setbox\csname graph\endcsname
 =\vtop{\vskip 0pt\hbox{%
    \special{pn 8}%
    \special{ar 390 91 65 65 0 6.28319}%
    \special{sh 1.000}%
    \special{pn 1}%
    \special{pa 225 66}%
    \special{pa 325 91}%
    \special{pa 225 116}%
    \special{pa 225 66}%
    \special{fp}%
    \special{pn 8}%
    \special{pa 0 91}%
    \special{pa 225 91}%
    \special{fp}%
    \special{ar 1039 91 65 65 0 6.28319}%
    \special{sh 1.000}%
    \special{pn 1}%
    \special{pa 874 66}%
    \special{pa 974 91}%
    \special{pa 874 116}%
    \special{pa 874 66}%
    \special{fp}%
    \special{pn 8}%
    \special{pa 455 91}%
    \special{pa 874 91}%
    \special{fp}%
    \graphtemp=\baselineskip
    \multiply\graphtemp by -1
    \divide\graphtemp by 2
    \advance\graphtemp by .5ex
    \advance\graphtemp by 0.091in
    \rlap{\kern 0.714in\lower\graphtemp\hbox to 0pt{\hss $a$\hss}}%
    \special{sh 1.000}%
    \special{pn 1}%
    \special{pa 1171 193}%
    \special{pa 1085 137}%
    \special{pa 1188 146}%
    \special{pa 1171 193}%
    \special{fp}%
    \special{pn 8}%
    \special{pa 1085 45}%
    \special{pa 1234 -5}%
    \special{pa 1429 -5}%
    \special{pa 1429 188}%
    \special{pa 1234 188}%
    \special{pa 1094 140}%
    \special{sp}%
    \graphtemp=.5ex
    \advance\graphtemp by 0.091in
    \rlap{\kern 1.519in\lower\graphtemp\hbox to 0pt{\hss $\tau$\hss}}%
    \hbox{\vrule depth0.182in width0pt height 0pt}%
    \kern 1.519in
  }%
}%

$\nu(x) = ~\,\raisebox{13.6pt}[0pt][0pt]{\box\graph}\;~$ \qquad.
\end{center}
Then $\rho(x) \bis{w} \nu(x)$, so that I may assume $\rho \bis{w} \nu$.
Now the term $t$ performs the sequential composition of the process filled in for $x$ with the process
doing a single $b$ action. One has $\den{x;b}(\rho)\nobis_w\den{x;b}(\nu)$ because only the first of
these processes can ever perform the $b$. Thus, when using the process graph semantics of this TSS,
$\bis{w}$ fails to be a congruence for the language specified.

However, when taking the closed-term semantics, all processes that my be filled in for $x$ are terms
in the given language and thus must terminate after performing finitely many transitions.
In this setting $\bis{w}$ is actually a congruence. However, it stops being a congruence when
recursion is added to the language.
\end{example}

\section{Relating Sanity Requirements for the Two Semantics of TSSs}\label{sec:relating}

Given an equivalence relation $\sim$ on $\IG(A)$, let $\sim^{\textbf{\textsf{cl}}}_P$ be the equivalence relation
on the set $\T^r(\Sigma)$ of closed terms of a TSS $P=(\Sigma,A,R)$ defined by
$p \sim^{\textbf{\textsf{cl}}}_P q$ iff $\den{p}_P^\emptyset \sim \den{q}_P^\emptyset$ (cf.\ \df{interpreting}).
In case $\sim$ relates each process graph with its reachable part, the equivalence $\sim^{\textbf{\textsf{cl}}}_P$
coincides with $\sim^ {\textbf{\textsf{ci}}}_P$, as defined in \Sec{lifting}.

\begin{observation}{pure}
If $P$ is pure and $p,q\in\T^r(\sigma)$, then $p \sim^{\textbf{\textsf{cl}}}_P q$ iff $p \sim^ {\textbf{\textsf{pg}}}_P q$.
\end{observation}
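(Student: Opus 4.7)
The plan is to reduce the observation to the fact that, for a pure TSS $P$ and a closed term $p\in\T^r(\Sigma)$, the process graph meaning $\den{p}_P(\rho)$ does not depend on the valuation $\rho$ and coincides with $\den{p}_P^\emptyset$. Once this is established, both sides of the claimed biconditional are literally the statement that $\den{p}_P^\emptyset \sim \den{q}_P^\emptyset$.

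First I would unfold the two definitions. By Definition~\ref{df:lifting}, instantiated with the process graph semantics (so $\ID=\IG(A)$ and $\den{\_}=\den{\_}_P$), one has
\[
p\sim^{\textbf{\textsf{pg}}}_P q \iff \forall \rho\!:\!\Var\rightarrow\IG(A).\ \den{p}_P(\rho)\sim\den{q}_P(\rho),
\]
while by definition $p\sim^{\textbf{\textsf{cl}}}_P q$ is just $\den{p}_P^\emptyset\sim\den{q}_P^\emptyset$. So the whole observation comes down to showing
\begin{equation}\label{eq:reduce}
\den{p}_P(\rho)=\den{p}_P^\emptyset \quad\text{for every closed }p\text{ and every }\rho.
\end{equation}

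Next I would verify \eqref{eq:reduce}. Since $p$ is closed, $\var(p)=\emptyset$, so the empty set of process graphs trivially supports $(p,\rho)$ in the sense of Section~\ref{sec:pg semantics}, and is transition-closed. Because $P$ is pure, Proposition~\ref{pr:pure} guarantees that any such supporting transition-closed $\IG^*$ is adequate for $\den{p}_P(\rho)$; in particular $\IG^*=\emptyset$ is adequate. Hence, using that $p[\rho]=p$ for closed $p$,
\[
\den{p}_P(\rho) \;=\; \den{p[\rho]}_{P+\IGs^*}^\emptyset \;=\; \den{p}_{P+\emptyset}^\emptyset \;=\; \den{p}_P^\emptyset,
\]
where the last two equalities use Definition~\ref{df:interpreting} together with the fact that adding no new constants or rules leaves the specified LTS unchanged.

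With \eqref{eq:reduce} in hand, the observation follows immediately: for every $\rho$, $\den{p}_P(\rho)\sim\den{q}_P(\rho)$ is the same statement as $\den{p}_P^\emptyset\sim\den{q}_P^\emptyset$, so the universally quantified condition defining $p\sim^{\textbf{\textsf{pg}}}_P q$ collapses to the single condition defining $p\sim^{\textbf{\textsf{cl}}}_P q$. There is no real obstacle here; the only point requiring care is to invoke purity correctly so that $\emptyset$ is adequate (otherwise $\den{p}_P(\rho)$ could in principle depend on which graphs one puts in $\IG^*$, even for closed $p$, as Example~\ref{ex:SOS graph model 2} shows for non-pure TSSs).
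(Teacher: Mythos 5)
Your argument is correct and is essentially the reasoning the paper leaves implicit (the Observation is stated without proof, but the paper itself notes that $\den{t}_P=\den{t}_P^\emptyset$ for closed $t$ under the manifest interpretation, and uses the adequacy of $\IG^*=\emptyset$ for closed terms of pure TSSs in the proof of Theorem~\ref{thm:finer}). You correctly identify purity, via Proposition~\ref{pr:pure}, as the point that makes $\den{p}_P(\rho)$ independent of $\rho$ and equal to $\den{p}_P^\emptyset$, which is exactly what collapses the two lifted equivalences on closed terms.
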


\begin{theorem}{sanity}
Let $P$ be a TSS that induces a process graph semantics $\den{\_}_P$
and let $\sim$ be an equivalence relation on $\IG(A)$.
Then $\den{\_}_P$ satisfies the sanity requirements
(\ref{comp-operators})--(\ref{solutions}) of Section~\ref{sec:semantics} up to $\sim$
if the closed-term semantics of $P+\IG^*$ satisfies these requirements up to $\sim^{\textbf{\textsf{cl}}}_{P+\IGs^*}$
for any choice of $\IG^*$.
\end{theorem}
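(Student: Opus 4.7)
The plan is to verify each requirement (\ref{comp-operators})--(\ref{solutions}) for $\den{\_}_P$ by translating it, via an appropriately chosen $\IG^*$, into its closed-term counterpart for $P+\IG^*$, where the hypothesis directly applies.  For the finitely many terms and valuations $\rho,\nu:\Var\to\IG(A)$ appearing in a given requirement, I first fix a transition-closed $\IG^*$ that contains every $\rho(x)$ and $\nu(x)$ and is adequate for every process-graph interpretation $\den{t}_P(\rho)$ and $\den{u}_P(\nu)$ occurring in the requirement; such an $\IG^*$ exists because any transition-closed superset of an adequate set is adequate, so finitely many adequacies may be combined.  Then $\den{t}_P(\rho)=\den{t[\rho]}^\emptyset_{P+\IGs^*}$, so the process-graph equation $\den{t}_P(\rho)\sim\den{u}_P(\nu)$ becomes literally $t[\rho]\sim^{\textbf{\textsf{cl}}}_{P+\IGs^*}u[\nu]$.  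For (\ref{comp-operators}), (\ref{alpha-conversion1}) and (\ref{solutions}) this translation turns the process-graph premise directly into the premise of the corresponding closed-term requirement for $P+\IG^*$ (applied, in the recursion cases, to the specifications $S[\rho\backslash V_S]$ and $S'[\nu\backslash V_S]$ over $\Sigma+\IG^*$), whose conclusion the hypothesis supplies; translating back gives the process-graph conclusion.  For instance, for (\ref{solutions}), (\ref{solutions-closed}) applied to $S[\rho\backslash V_S]$ in $P+\IG^*$ yields $\rec{X|S[\rho\backslash V_S]}\sim^{\textbf{\textsf{cl}}}_{P+\IGs^*}\rec{S_X[\rho\backslash V_S]|S[\rho\backslash V_S]}$, which unfolds to $\den{\rec{X|S}}_P(\rho)\sim\den{\rec{S_X|S}}_P(\rho)$.

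The delicate case is (\ref{comp-recursion}).  Its process-graph premise provides $\den{S_Y}_P((\rho\backslash V_S)\cup\xi)\sim\den{S'_Y}_P((\nu\backslash V_S)\cup\xi)$ only for valuations $\xi:V_S\to\IG(A)$, while (\ref{comp-recursion-closed}) applied in $P+\IG^*$ to $S[\rho\backslash V_S]$ and $S'[\nu\backslash V_S]$ demands the analogous closed-term equivalence at every closed substitution $\sigma:V_S\to\T^r(\Sigma+\IG^*)$.  The bridge is, for each such $\sigma$, to set $\xi_\sigma(Y):=\den{\sigma(Y)}^\emptyset_{P+\IGs^*}$ and enlarge $\IG^*$ to a transition-closed $\IG^{**}\supseteq\IG^*$ containing every $\xi_\sigma(Y)$ and adequate for $\den{S_Y}_P((\rho\backslash V_S)\cup\xi_\sigma)$ and $\den{S'_Y}_P((\nu\backslash V_S)\cup\xi_\sigma)$.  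In $P+\IG^{**}$ one has $\sigma(Y)\sim^{\textbf{\textsf{cl}}}_{P+\IGs^{**}}\xi_\sigma(Y)$; applying \pr{congruence} to the closed-term semantics of $P+\IG^{**}$, whose compositionality up to $\sim^{\textbf{\textsf{cl}}}_{P+\IGs^{**}}$ is hypothesised, with $t=S_Y$ and the two closed substitutions $(\rho\backslash V_S)\cup\sigma$ and $(\rho\backslash V_S)\cup\xi_\sigma$, yields $S_Y[\rho\backslash V_S][\sigma]\sim^{\textbf{\textsf{cl}}}_{P+\IGs^{**}}S_Y[\rho\backslash V_S][\xi_\sigma]$, and analogously for $S'$ and $\nu$.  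Since $S_Y[\rho\backslash V_S][\xi_\sigma]$ has process-graph meaning $\den{S_Y}_P((\rho\backslash V_S)\cup\xi_\sigma)$ by adequacy of $\IG^{**}$, the process-graph premise at $\xi_\sigma$ equates this with $\den{S'_Y}_P((\nu\backslash V_S)\cup\xi_\sigma)$, and adequacy monotonicity pulls the resulting equivalence back to $P+\IG^*$, verifying the premise of (\ref{comp-recursion-closed}) at $\sigma$.  Since $\sigma$ is arbitrary, (\ref{comp-recursion-closed}) fires and, translated back, yields (\ref{comp-recursion}) for $\den{\_}_P$.

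The main obstacle is the step $\sigma(Y)\sim^{\textbf{\textsf{cl}}}_{P+\IGs^{**}}\xi_\sigma(Y)$, unwound as $\den{\sigma(Y)}^\emptyset_{P+\IGs^{**}}\sim\den{\xi_\sigma(Y)}^\emptyset_{P+\IGs^{**}}$.  Adequacy monotonicity yields $\den{\sigma(Y)}^\emptyset_{P+\IGs^{**}}=\xi_\sigma(Y)$; the other term is the reachable part, per \df{interpreting}, of the transition graph generated at the graph-constant $\xi_\sigma(Y)$ by $\rightarrow_{\IGs^{**}}$, which by transition-closure of $\IG^{**}$ is the graph $\xi_\sigma(Y)$ with its states relabelled by the corresponding graph-constants, canonically isomorphic to $\xi_\sigma(Y)$ itself.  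The argument thus rests on $\sim$ respecting this canonical isomorphism, a mild property of any natural semantic equivalence on $\IG(A)$ and effectively a standing convention throughout the paper; combined with careful bookkeeping of which $\IG^*$ is adequate for which quantities, this is what makes the verification of (\ref{comp-recursion}) go through.
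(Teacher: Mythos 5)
Your overall strategy---fix a transition-closed $\IG^*$ adequate for every interpretation occurring in a given requirement, rewrite each $\den{t}_P(\rho)$ as $\den{t[\rho]}^\emptyset_{P+\IGs^*}$, invoke the corresponding closed-term requirement for $P+\IG^*$, and translate back---is exactly the paper's, and your treatment of (\ref{comp-operators}), (\ref{alpha-conversion1}) and (\ref{solutions}) matches the paper's proof. You diverge only on (\ref{comp-recursion}): the paper instantiates the premise of (\ref{comp-recursion-closed}) at graph-constant substitutions $\xi:V_S\rightarrow\IG^*$ only and then applies (\ref{comp-recursion-closed}) directly, whereas you observe that (\ref{comp-recursion-closed}) nominally demands its premise at \emph{every} closed substitution $\sigma:V_S\rightarrow\T^r(\Sigma+\IG^*)$ and insert a bridging argument to cover the remaining $\sigma$.

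That bridge is where your proof breaks. The step $\sigma(Y)\sim^{\textbf{\textsf{cl}}}_{P+\IGs^{**}}\xi_\sigma(Y)$ unwinds to $G\sim\den{G}^\emptyset_{P+\IGs^{**}}$ for the graph constant $G=\xi_\sigma(Y)$, and this is precisely sanity requirement (\ref{variable interpretation}) up to $\sim$ for the process-graph semantics---the one requirement that \thm{sanity} deliberately excludes from both its hypotheses and its conclusion. It is not ``a standing convention throughout the paper'': \ex{variable interpretation} is devoted to showing it can fail, even up to $\bis{}$, because a rule whose conclusion has a variable as the left-hand side lets the constant $G$ acquire transitions beyond those of $\rightarrow_{\IGs^{**}}$, so the reachable part of the LTS at $G$ need not be isomorphic, or even bisimilar, to $G$ itself. \pr{variable interpretation} recovers requirement (\ref{variable interpretation}) only under an extra syntactic hypothesis on $P$ and only up to $\bis{}$, neither of which you are entitled to here. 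So your verification of (\ref{comp-recursion}) rests on a premise the theorem does not supply. (For what it is worth, the subtlety you flagged is real---the paper's own proof checks the premise of (\ref{comp-recursion-closed}) at graph-constant substitutions alone before firing it---but the repair as you have written it does not go through.)
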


\begin{proof}
Let $\rho,\nu:\Var\rightarrow\IG(A)$, $(f,n)\in\Sigma$ and $t_i,u_i\in\IT^r(\Sigma)$,
such that $\den{t_i}_P(\rho)\sim\den{u_i}_P(\nu)$ for $i=1,...,n$.
Let the set $\IG^*\subseteq \IG(A)$ be adequate for the definition of $\den{t_i}_P(\rho)$ and
$\den{u_i}_P(\nu)$ for $i=1,...,n$ as well as for $\den{f(t_1,...,t_n)}_P(\rho)$ and
$\den{f(u_1,...,u_n)}_P(\nu)$. Then $\den{t_i[\rho]}_{P+\IGs^*}^\emptyset \sim \den{u_i[\nu]}_{P+\IGs^*}^\emptyset$,
i.e., $t_i[\rho] \sim^{\textbf{\textsf{cl}}}_{P+\IGs^*} u_i[\nu]$, for $i=1,...,n$.
So \plat{f(t_1[\rho],...,t_n[\rho]) \sim^{\textbf{\textsf{cl}}}_{P+\IGs^*} f(u_1[\nu],...,u_n[\nu])}
by Requirement~(\ref{comp-operators-closed}) for the closed-term semantics of $P+\IG^*$; that is,
$\den{f(t_1[\rho],...,t_n[\rho])}_{P+\IGs^*}^\emptyset \sim \den{f(u_1[\nu],...,u_n[\nu])}_{P+\IGs^*}^\emptyset$,
or $\den{f(t_1,...,t_n)}_P(\rho) \sim \den{f(u_1,...,u_n)}_P(\nu)$.

Let $\rho,\!\nu\!:\Var\mathbin\rightarrow\IG(A)$ and $S,S'\!:W\mathbin\rightarrow \IT^r(\Sigma)$
with $X\mathbin\in W \mathbin\subseteq \Var$,
such that $\den{S_Y}_P(\rho\backslash\! W) \sim \den{S'_Y}_P(\nu\backslash\! W)$ for $Y\mathbin\in W$.
The latter means that $\den{S_Y}_P(\rho\backslash W)(\xi) \sim \den{S'_Y}_P(\nu\backslash W)(\xi)$
for all $Y\in W$ and $\xi:W\rightarrow\IG(A)$.
Let $\IG^*\subseteq \IG(A)$ be adequate for $\den{S_Y}_P(\xi \cup \rho\backslash W)$
and $\den{S'_Y}_P(\xi \cup \nu\backslash W)$ for all $Y\in W$ and $\xi:W\rightarrow\IG(A)$,\linebreak
as well as for $\den{\rec{X|S}}_P(\rho)$ and $\den{\rec{X|S'}}_P(\nu)$.
Then $S_Y[\rho\backslash W][\xi] \sim^{\textbf{\textsf{cl}}}_{P+\IGs^*} S'_Y[\nu\backslash W][\xi]$
for all $Y\in W$ and $\xi:W\mathbin\rightarrow\IG(A)$.
So $\rec{X|S[\rho\backslash W]} \sim^{\textbf{\textsf{cl}}}_{P+\IGs^*} \rec{X|S'[\nu\backslash W]}$
by Requirement~(\ref{comp-recursion-closed}) for the
\hyperlink{closed-term}{closed-term semantics} of $P{+}\IG^*$, and
$\den{\rec{X|S}}_P(\rho) = \den{\rec{X|S}[\rho]}^\emptyset_{P+\IGs^*} =
 \den{\rec{X|S[\rho\backslash W]}}^\emptyset_{P+\IGs^*} \sim
 \den{\rec{X|S'[\nu\backslash W]}}^\emptyset_{P+\IGs^*} =
 \den{\rec{X|S'}}_P(\nu)$.

Let $S:V_S\rightarrow\IT^r(\Sigma)$ with $X\in V_S\subseteq \Var$, and let $\gamma:V_S\rightarrow \Var$ be
an injective substitution such that the range of $\gamma$ contains no variables occurring free in
$\rec{S_Y|S}$ for some $Y\in V_S$. Take $\rho:\Var\rightarrow\IG(A)$.
Let $\IG^*$ be adequate for $\den{\rec{X|S}}_P(\rho)$
and $\den{\rec{\gamma(X)|S[\gamma]}}_P(\rho)$.
By Requirement~(\ref{alpha-conversion1-closed}) for the closed-term semantics of $P+\IG^*$ one has
$\rec{\gamma(X)|S[\gamma]}[\rho] \sim^{\textbf{\textsf{cl}}}_{P+\IGs^*} \rec{X|S}[\rho]$,
using that $\rec{X|S}[\rho] = \rec{X|S[\rho\backslash V_S]}$ and $\rec{\gamma(X)|S[\gamma]}[\rho]
= \rec{\gamma(X)|S[\gamma][\rho\backslash\gamma(V_S)]} = \rec{\gamma(X)|S[\rho\backslash V_S][\gamma]}$.
So $\den{\rec{\gamma(X)|S[\gamma]}}_{P}(\rho) \sim \den{\rec{X|S}}_{P}(\rho)$.

Let $S:V_S\rightarrow\IT^r(\Sigma)$ with $X\in V_S\subseteq \Var$.
Take $\rho:\Var\rightarrow\IG(A)$.
Let $\IG^*$ be adequate for $\den{\rec{X|S}}_P(\rho)$ and $\den{S_X|S}_P(\rho)$. Then
$\rec{X|S[\rho\backslash V_S]} \sim^{\textbf{\textsf{cl}}}_{P+\IGs^*} \rec{S_X[\rho\backslash V_S]|S[\rho\backslash V_S]}$
by Requirement~(\ref{solutions-closed}) for the closed-term semantics of $P+\IG^*\!\!$.
Hence
$\den{\!\rec{X|S}\!}_P(\rho) \mathbin= \den{\rec{X|S}[\rho]}_{\scriptscriptstyle P+\IGss^*}^\emptyset \mathbin=
\den{\rec{X|S[\rho\!\backslash\! V_S]}}_{\scriptscriptstyle P+\IGss^*}^\emptyset \mathbin\sim 
\den{\rec{S_X[\rho\!\backslash\! V_S]|S[\rho\!\backslash\! V_S]}}_{\scriptscriptstyle P+\IGss^*}^\emptyset\linebreak[3] =
\den{\rec{S_X|S}[\rho]}_{\scriptscriptstyle P+\IGss^*}^\emptyset =
\den{S_X|S}_P(\rho)$.
\end{proof}

\thm{sanity} does not extend to sanity requirement (\ref{variable interpretation}).
In fact, this requirement always holds for the closed-term interpretation of a TSS,
yet it does not always hold for the process graph interpretation:

\begin{example}{variable interpretation}
Let $P$ be a TSS with the single rule $\frac{x\ar{\tau}y,~~y\ar{\tau} z}{x\ar{\tau} z}$.
Then the process graph semantics of $P$ fails to satisfy sanity requirement
(\ref{variable interpretation}) up to $\bis{}$. Namely, if $\rho(x)$ is a graph
\expandafter\ifx\csname graph\endcsname\relax
   \csname newbox\expandafter\endcsname\csname graph\endcsname
\fi
\ifx\graphtemp\undefined
  \csname newdimen\endcsname\graphtemp
\fi
\expandafter\setbox\csname graph\endcsname
 =\vtop{\vskip 0pt\hbox{%
    \special{pn 8}%
    \special{ar 236 39 39 39 0 6.28319}%
    \special{sh 1.000}%
    \special{pn 1}%
    \special{pa 118 14}%
    \special{pa 197 39}%
    \special{pa 118 64}%
    \special{pa 118 14}%
    \special{fp}%
    \special{pn 8}%
    \special{pa 0 39}%
    \special{pa 118 39}%
    \special{fp}%
    \special{ar 630 39 39 39 0 6.28319}%
    \special{sh 1.000}%
    \special{pn 1}%
    \special{pa 512 14}%
    \special{pa 591 39}%
    \special{pa 512 64}%
    \special{pa 512 14}%
    \special{fp}%
    \special{pn 8}%
    \special{pa 276 39}%
    \special{pa 512 39}%
    \special{fp}%
    \graphtemp=\baselineskip
    \multiply\graphtemp by -1
    \divide\graphtemp by 2
    \advance\graphtemp by .5ex
    \advance\graphtemp by 0.039in
    \rlap{\kern 0.433in\lower\graphtemp\hbox to 0pt{\hss $\tau$\hss}}%
    \special{ar 1024 39 39 39 0 6.28319}%
    \special{sh 1.000}%
    \special{pn 1}%
    \special{pa 906 14}%
    \special{pa 984 39}%
    \special{pa 906 64}%
    \special{pa 906 14}%
    \special{fp}%
    \special{pn 8}%
    \special{pa 669 39}%
    \special{pa 906 39}%
    \special{fp}%
    \graphtemp=\baselineskip
    \multiply\graphtemp by -1
    \divide\graphtemp by 2
    \advance\graphtemp by .5ex
    \advance\graphtemp by 0.039in
    \rlap{\kern 0.827in\lower\graphtemp\hbox to 0pt{\hss $\tau$\hss}}%
    \hbox{\vrule depth0.079in width0pt height 0pt}%
    \kern 1.063in
  }%
}%

\;\raisebox{13.6pt}[0pt][0pt]{\box\graph}\;,
then $\den{x}_P(\rho)$ is the graph
\expandafter\ifx\csname graph\endcsname\relax
   \csname newbox\expandafter\endcsname\csname graph\endcsname
\fi
\ifx\graphtemp\undefined
  \csname newdimen\endcsname\graphtemp
\fi
\expandafter\setbox\csname graph\endcsname
 =\vtop{\vskip 0pt\hbox{%
    \special{pn 8}%
    \special{ar 236 59 39 39 0 6.28319}%
    \special{sh 1.000}%
    \special{pn 1}%
    \special{pa 118 34}%
    \special{pa 197 59}%
    \special{pa 118 84}%
    \special{pa 118 34}%
    \special{fp}%
    \special{pn 8}%
    \special{pa 0 59}%
    \special{pa 118 59}%
    \special{fp}%
    \graphtemp=.5ex
    \advance\graphtemp by 0.000in
    \rlap{\kern 0.433in\lower\graphtemp\hbox to 0pt{\hss $\tau$\hss}}%
    \special{ar 630 59 39 39 0 6.28319}%
    \graphtemp=.5ex
    \advance\graphtemp by 0.000in
    \rlap{\kern 0.827in\lower\graphtemp\hbox to 0pt{\hss $\tau$\hss}}%
    \special{sh 1.000}%
    \special{pn 1}%
    \special{pa 512 34}%
    \special{pa 591 59}%
    \special{pa 512 84}%
    \special{pa 512 34}%
    \special{fp}%
    \special{pn 8}%
    \special{pa 276 59}%
    \special{pa 512 59}%
    \special{fp}%
    \special{ar 1024 59 39 39 0 6.28319}%
    \special{sh 1.000}%
    \special{pn 1}%
    \special{pa 906 34}%
    \special{pa 984 59}%
    \special{pa 906 84}%
    \special{pa 906 34}%
    \special{fp}%
    \special{pn 8}%
    \special{pa 669 59}%
    \special{pa 906 59}%
    \special{fp}%
    \special{sh 1.000}%
    \special{pn 1}%
    \special{pa 919 117}%
    \special{pa 996 87}%
    \special{pa 951 156}%
    \special{pa 919 117}%
    \special{fp}%
    \special{pn 8}%
    \special{ar 630 -298 531 531 0.959715 2.330025}%
    \graphtemp=.5ex
    \advance\graphtemp by 0.177in
    \rlap{\kern 0.630in\lower\graphtemp\hbox to 0pt{\hss $\tau$\hss}}%
    \hbox{\vrule depth0.233in width0pt height 0pt}%
    \kern 1.063in
  }%
}%

\,\raisebox{10.6pt}[0pt][0pt]{\box\graph}\;,
and the two graphs are not (strongly) bisimulation equivalent.
\end{example}
Nevertheless, requirement (\ref{variable interpretation}) holds for almost all process algebras
found in the literature:

\begin{proposition}{variable interpretation}
Let $P$ be a TSS that has no rule with a variable as the left-hand side of the conclusion.
The process-graph interpretation of $P$ always
satisfies requirement (\ref{variable interpretation}) of Section~\ref{sec:semantics} up to $\bis{}$.
\end{proposition}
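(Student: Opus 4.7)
The plan is to unfold $\den{x}_P(\rho)$ through \df{strongly interpreting} and \df{interpreting}, and to show that it is isomorphic to the reachable part of $\rho(x)$, hence strongly bisimilar to $\rho(x)$. First, pick a set $\IG^*$ adequate for $\den{x}_P(\rho)$; since $x[\rho]$ is the graph constant $\rho(x)$ in the TSS $P+\IG^*$, one obtains $\den{x}_P(\rho) = \den{\rho(x)}^\emptyset_{P+\IGs^*}$, which by \df{interpreting} is the reachable part of the LTS specified by $P+\IG^*$ rooted at the constant $\rho(x) = (S,A,\rightarrow,r)$.

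The main step---and the only one that really uses the hypothesis---is the following syntactic claim: in $P+\IG^*$, the provable transitions whose source is a graph constant $H\in\IG^*$ are exactly the added transitions of the form $H\ar{a}H'$ where $H=(S',A,\rightarrow',r')$, $H'=(S',A,\rightarrow',s')$, and $r'\ar{a}s'$ in the graph $H$. To establish this I would examine the root of any proof of $H\ar{a}K$ from $P+\IG^*$. Its conclusion is a closed substitution instance of some rule; if that rule came from $P$, its conclusion template $t\ar{a}t'$ would have $t$ non-variable by hypothesis, so $t[\sigma]$ would share the outermost symbol of $t$---a symbol of $\Sigma$---and could not coincide with the graph constant $H$. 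Therefore the root rule must be one of the added $\rightarrow_{\IGs^*}$-rules, which has the claimed shape.

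Iterating this observation in the construction of the reachable part of $\rho(x)$ shows that $\den{x}_P(\rho)$ consists of constants $G_s := (S,A,\rightarrow,s)$ for $s$ reachable from $r$ in $\rho(x)$, with edges $G_s\ar{a}G_{s'}$ precisely when $s\ar{a}s'$ in $\rho(x)$. The map $G_s\mapsto s$ then yields an isomorphism to the reachable part of $\rho(x)$, which is strongly bisimilar to $\rho(x)$ itself (the identity on reachable states is a bisimulation). Thus $\den{x}_P(\rho) \bis{} \rho(x)$, as required. I do not anticipate a serious obstacle: the only subtle point is the syntactic claim, and it hinges squarely on the hypothesis that no rule of $P$ has a variable as the left-hand side of its conclusion---without that, a $P$-rule could fire at a graph-constant source and generate spurious transitions, exactly as in \ex{variable interpretation}.
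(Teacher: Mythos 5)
Your proof is correct and takes essentially the same route as the paper's: both identify $\den{x}_P(\rho)$ with the reachable part of the graph whose states are the constants $G_s=(S,A,\rightarrow,s)$ and whose edges mirror those of $\rho(x)$, and then exhibit the evident bisimulation (or isomorphism onto the reachable part) with $\rho(x)$. Your explicit justification of the syntactic claim---that at a graph-constant source only the added $\rightarrow_{\IGs^*}$-rules can fire, because a $P$-rule's conclusion source is a non-variable $\Sigma$-term and so cannot instantiate to a graph constant---is precisely the step the paper leaves implicit, and it is exactly where the hypothesis is used.
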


\begin{proof}
For $x\in \Var$ and $\rho:\Var\rightarrow\IG(A)$
let $\IG^* \subseteq \IG(A)$ be adequate for $\den{x}_P(\rho)$ and
let $\rho(x)\in\IG^*$ be the graph $g=(S,\rightarrow,I)$.
For each state $s\in S$ there is a graph $g_s:=(S,\rightarrow,s)$ in $\IG^*$.
Now $\den{x}_P(\rho)$ is the reachable part of the graph
$(G,\rightarrow_G,g)$, where $G=\{g_s\mid s\in S\}$ and
$\mathord{\rightarrow_G} = \{(g_s,a,g_t) \mid (s,a,t)\in \mathord{\rightarrow}\}$.
The relation $\R$ given by $g_s \R s$ for all states $s\in S$
reachable from $I$ clearly is a bisimulation. Therefore
$\den{x}_P(\rho) \bis{} \rho(x)$.
\end{proof}

\section{\texorpdfstring{$\!\!\!$}{}%
         Preservation of Relative Expressiveness under Conservative Extensions}\label{sec:conservative}

\begin{definition}{sum of TSSs}
If $P=(\Sigma_P,A,R_P)$ and $Q=(\Sigma_Q,A,R_Q)$ are TSSs
with $\Sigma_P$ and $\Sigma_Q$ disjoint
then $P+Q$ denotes the TSS $(\Sigma_P\cup\Sigma_Q,A,R_P\cup R_Q)$.
\end{definition}
Let $P_0$ be the TSS of \ex{context dependence}, but without the operator $f$.
Let $P_f$ be the part of the TSS of \ex{context dependence} that only contains the operator $f$,
so that the entire TSS of \ex{context dependence} is $P_0+P_f$.
This TSS does not feature recursion.

A \emph{translation} between two languages with signatures $\Sigma$ and $\Sigma'$ is a mapping
$\fT:\IT^r(\Sigma) \rightarrow \IT^r(\Sigma')$. Consider the translation $\fT_{id}$ from the
language specified by $P_0$ to the language specified by $P_0+P_f$, given by $\fT_{id}(t)=t$ for
all $t\in\Sigma_{P_0}$. Also consider the translation $\fT_{\it op}$ in the opposite direction, given by
$\fT_{\it op}(a.t):= a.\fT_{\it op}(t)$, $\fT_{\it op}(id(t)):= id(\fT_{\it op}(t))$
and $\fT_{\it op}(f(t)):= id(\fT_{\it op}(t))$ for all $t\in\IT(\Sigma_{P_o+P_f})$.

In e.g.\ \cite{vG18b} a concept of expressiveness of specification languages is studied
such that language $\fL'$ is at least as expressive as language $\fL$ up to a semantic equivalence
relation $\sim$ iff there exist a translation from $\fL$ into $\fL'$ that is valid up to $\sim$.
It is not important here to state the precise definition of validity; it suffices to point out that
$\fT_{\it op}$ is valid up to $\bis{}$ iff one has $f(x) \bis{} id(x)$. Thus, applying \df{lifting}, it is
valid when employing the closed-term semantics, but not when employing the process graph semantics.\pagebreak[3]
The transition $\fT_{id}$ on the other hand is valid up to $\bis{}$ regardless which
of the two interpretations one picks. So under the closed-term interpretation the two languages are
equally expressive, whereas under the process graph semantics the language given by $P_0+P_f$
is more expressive than the one given by $P_0$.

For TSSs $P$ and $Q$, write $P \preceq Q$ if the language specified by $Q$ is at least as expressive
as the one specified by $P$. An intuitively plausible theorem is that\vspace{-2pt}
\begin{equation}\label{cons}
\mbox{$P_1 \preceq P_2$ \quad implies \quad $P_1+Q \preceq P_2+Q$,}
\end{equation}
at least under some mild conditions on the TSSs $P_1$, $P_2$ and $Q$, for instance that they are pure and fit the
\emph{tyft} format defined in \cite{GrV92}.\footnote{These mild conditions should ensure that $P_i+Q$ is a
\emph{conservative extension} of $P_i$, for $i=1,2$, as defined in \cite{GrV92}.}
This theorem fails when employing the closed-term semantics of TSSs: take $P_1$ to be $P_0+P_f$,
$P_2$ to be $P_0$, with $\fT_{\it op}$ being the witness for $P_1 \preceq P_2$,
and $Q$ to be the TSS with as single operator $\tau.\_$ and as only transition
rule $\tau.x \ar{\tau} x$. For the operator $f$ in the TSS $P_0+P_f+Q$ drops $\tau$-transitions, and
has no counterpart in the TSS $P_0+Q$.

This problem is fixed when employing the process graph semantics.
Once the omitted definition of validity is supplied \cite{vG18b},
the proof of (\ref{cons}) is entirely straightforward.

\section{Related Work}\label{sec:related}

Dissatisfaction with the traditional closed-term interpretation of TSSs occurred earlier in
\cite{LL91,LV96,GM00,Re00} and \cite{BBB07}. However, rather than adapting the interpretation of
TSSs, as in the present paper, these papers abandon the notion of a TSS in favour of different
frameworks of system specification that are arguably more suitable for giving meaning to open terms.
Larsen and Liu \cite{LL91} use \emph{context systems}. The CCS transition rule\vspace{-9pt}
\[
  \frac{x \ar{a} x' \quad y \ar{\bar a} y'}{x|y \ar{\tau} x'|y'}
\]
for instance takes in a context system the shape
\plat{x|y \begin{array}{c}\scriptstyle \tau\\[-8pt]\longrightarrow\\[-9.5pt]\scriptstyle a,\bar a\;\mbox{}\end{array} x'|y'},
or rather, suppressing the redundant variable names,
\plat{| \begin{array}{c}\scriptstyle \tau\\[-8pt]\longrightarrow\\[-9.5pt]\scriptstyle a,\bar a\;\mbox{}\end{array} |}.
It says that the operator $|$ can perform a $\tau$-transition, provided its first argument does an
$a$-transition, and its second argument an $\bar a$. The context systems of \cite{LL91} form the
counterpart of TSSs in the De Simone format \cite{dS85}. The model is generalised by Lynch \&
Vaandrager \cite{LV96} to \emph{action transducers}, by Gadducci \& U. Montanari
\cite{GM00} to the \emph{tile model}, and by Rensink \cite{Re00} to \emph{conditional transition systems}.
The latter two proposals are further generalised to \emph{symbolic transition systems} by
Baldan, Bracciali \& Bruni \cite{BBB07}.

One method to relate these models with TSSs under the closed-term and process graph interpretations
is through notions of strong bisimilarity on open terms. This is a central theme
in \cite{Re00}. The most natural notion of bisimulation on the above models is
\emph{bisimulation under formal hypothesis}, $\bis{}^{\textbf{\textsf{fh}}}$. That name stems from De
Simone \cite{dS85}, who defined the same concept in terms of TSSs. On the context systems sketched
above it requires the usual transfer property for bisimulations for doubly labelled transitions
such as \plat{\begin{array}{c}\scriptstyle \tau\\[-8pt]\longrightarrow\\[-9.5pt]\scriptstyle a,\bar a\;\mbox{}\end{array}}.
On TSSs, a bisimulation under formal hypothesis essentially is a symmetric relation $\R$ on open
terms such that
\begin{center}
if $t \R u$ and $P\vdash\displaystyle\frac{\{x_i \ar{a_i} y_i \mid i\in I\}}{t\ar{a}t'}$
then $P\vdash\displaystyle \frac{\{x_i \ar{a_i} y_i \mid i\in I\}}{u\ar{a}u'}$
for an $u'\in \IT^r(\Sigma_Q)$ with $t'\R u'$.
\end{center}
Rensink \cite{Re00} shows that $\bis{}^{\textbf{\textsf{fh}}}$ is strictly finer than
$\bis{}^{\textbf{\textsf{ci}}}$.

\begin{example}{fh}
  Let $P$ be the TSS with inaction $0$, action prefix, choice and intersection, specified by the following rules:\vspace{-3ex}
  \[a.x \ar{a} x \qquad
    \frac{x \ar{a} x'}{x+y \ar{a} x'} \qquad
    \frac{y \ar{a} y'}{x+y \ar{a} y'} \qquad
    \frac{x \ar{a} x' \quad y \ar{a} y'}{x\cap y \ar{a} x' \cap y'}
  \pagebreak[4]\]
  where $a$ ranges over a set of actions $A$.
  Then $b.0 + b.a.0 + b.(x \cap a.0) \bis{}^{\textbf{\textsf{ci}}} b.0 + b.a.0$, for no matter what one fills in for
  $x$, the process $x \cap a.0$  either cannot perform any transitions, or it can only do a
  single $a$. So the term $b.(x \cap a.0)$ behaves either like $b.0$ or like $b.a.0$.
  On the other hand, $b.0 + b.a.0 + b.(x \cap a.0) \nobis^{\,\textbf{\textsf{fh}}} b.0 + b.a.0$.
  Namely any bisimulation under formal hypothesis $\R$ relating $b.0 + b.a.0 + b.(x \cap a.0)$ with
  $b.0 + b.a.0$ would also have to relate $x \cap a.0$ with either $0$ or $a.0$.
  However, once this choice is made, substituting the wrong value for $x$ shows that $\R$ relates
  two terms that are not equivalent.
\end{example}
Rensink also defines a \emph{hypotheses-preserving bisimulation equivalence}
$\bis{}^{\textbf{\textsf{hp}}}$ on open terms, which is situated strictly between
$\bis{}^{\textbf{\textsf{fh}}}$ and $\bis{}^{\textbf{\textsf{ci}}}$.
One has $b.0 + b.a.0 + b.(x \cap a.0) \nobis^{\,\textbf{\textsf{hp}}} b.0 + b.a.0$.
His analysis can be reused to show that $\bis{}^{\textbf{\textsf{hp}}}$ is finer than 
$\bis{}^{\textbf{\textsf{pg}}}$. Note that $b.0 + b.a.0 + b.(x \cap a.0)
\bis{}^{\textbf{\textsf{pg}}} b.0 + b.a.0$, for the same reasons as in \ex{fh}.
Thus, under the conditions of \thm{finer}, we arrive at a hierarchy\vspace{-2pt}
$${\bis{}^{\textbf{\textsf{fh}}}} \quad\subseteq\quad
  {\bis{}^{\textbf{\textsf{hp}}}} \quad\subseteq\quad
  {\bis{}^{\textbf{\textsf{pg}}}} \quad\subseteq\quad
  {\bis{}^{\textbf{\textsf{ci}}}}\quad.$$

\section{Concluding Remarks}\label{sec:conclusion}

This paper proposed a process graph semantics of TSSs as an alternative to the traditional
closed-term semantics. It interprets an operator from the language as an operation on process
graphs. Unlike the closed-term semantics, this interpretation is independent of the selection of
processes that are expressible in the TSS as a whole. The intuitively plausible statement that an
expressiveness inclusion between languages is preserved under a conservative extension of source
and target language alike, fails for the closed-term semantics but holds for the proposed process
graph semantics.

I reviewed five sanity requirements on languages equipped with a semantic equivalence relation
$\sim$, and showed that four of them hold under the process semantics of a language if they hold under the
closed-term semantics. Here I end with a few observations on when these requirements hold at all.

In \cite{vG17b}, the \emph{ntyft/ntyxt format with recursion} is introduced.
It defines a wide class of TSSs, containing many known process algebras, including
CCS, CSP, ACP, {\sc Meije} and SCCS\@. It generalises the ntyft/ntyxt format of \cite{Gr93} by the
addition of recursion as a separate language construct.
The \emph{tyft/tyxt format with recursion} is the same, but not allowing negative premises.
\cite{vG17b} shows that all languages specified by a TSS in the ntyft/ntyxt format with recursion
satisfy property (\ref{congruence}) up to $\bis{}$, saying that strong bisimilarity is a congruence.
This is a stronger property than (\ref{comp-operators}) up to $\bis{}$, which thus also holds for
the ntyft/ntyxt format.
This was shown for the closed-term interpretation of TSSs. By \thm{sanity} we now also have
(\ref{comp-operators}) up to $\bis{}$ for the process graph semantics of pure TSSs in the
ntyft/ntyxt format with recursion.

The same paper establishes that (\ref{comp-recursion}) holds up to $\bis{}$ for the closed-term semantics of all
TSSs in the tyft/tyxt format with recursion, thereby generalising a result from \cite{Re00}.
It thus also holds for the process graph semantics of all pure TSSs in the tyft/tyxt format with recursion.

It is not hard to show that also requirements (\ref{alpha-conversion1}) and (\ref{solutions})
hold up to $\bis{}$ for the closed-term interpretation of TSSs in the ntyft/ntyxt format with recursion, and thus
for the process graph semantics of pure TSSs in the ntyft/ntyxt format with recursion.

Thanks to the equational nature of requirements (\ref{variable interpretation}), (\ref{alpha-conversion1}) and 
(\ref{solutions}), once they hold up to $\bis{}$, they surely hold up to any coarser equivalence.
This covers most semantic equivalences found in the literature.
The same cannot be said for requirements (\ref{comp-operators}) and (\ref{comp-recursion}).
These need to be reestablished for each semantic equivalence. There is a lot of work on congruence
formats, ensuring (\ref{comp-operators}) for a variety of semantic equivalence. See for instance
\cite{FGL17a}, and references therein. Yet, besides \cite{Re00} and \cite{vG17b} I know of no
congruence formats targeting requirement (\ref{comp-recursion}).

\paragraph{Acknowledgement}
My thanks to the referees for their thorough proofreading and helpful suggestions.

\bibliographystyle{eptcs}
\bibliography{../ACP/acp}

\end{document}